\documentclass[11pt]{llncs}
\usepackage{amsmath,amssymb}
\usepackage{graphicx}
\newcommand{\comm}[1]{}

\usepackage{fullpage}
\usepackage[top=0.8in, bottom=0.9in, left=0.9in, right=0.9in]{geometry}

\usepackage{color,soul}
\usepackage{amsmath,amssymb}	
\usepackage[euler]{textgreek}
\usepackage{algorithm}
\usepackage{algpseudocode}
\usepackage{cite}
\usepackage{enumitem}
\graphicspath{{./Fig/}}

\usepackage[pdftex, plainpages = false, pdfpagelabels, 
                 %pdfpagelayout = useoutlines,
                 bookmarks=false,
                 bookmarksopen = true,
                 bookmarksnumbered = true,
                 breaklinks = true,
                 linktocpage,
                 pagebackref,
                 colorlinks = true,  % was true
                 linkcolor = blue,
                 urlcolor  = cyan,
                 citecolor = red,
                 anchorcolor = green,
                 hyperindex = true,
                 hyperfigures
                 ]{hyperref}

%\usepackage{lineno}
%\linenumbers

%%% Macro definitions for Commonly used symbols
%\newcommand{\etas}{\ensuremath{\eta_{\mathrm{s}}}}
%\renewcommand{\algorithmicrequire}{ \textbf{Input:}} %Use Input in the format of Algorithm
%\renewcommand{\algorithmicensure}{ \textbf{Output:}} %Use Output in the format of Algorithm

%=======================my definitions============================

\def\calI{\mathcal{I}}

\def\calR{\mathcal{R}}
\def\calU{\mathcal{U}}

\def\calD{\mathcal{D}}
\def\hs{\hat{s}}
\def\hp{\hat{p}}

\def\fd{F\!D}

\def\opt{S_{\text{opt}}}

\newtheorem{observation}{Observation}
%\newcommand{\FVD}{\mbox{$F\!V\!D$}}
%\usepackage{titlesec}
%\titlespacing*{\section}{0pt}{1.1\baselineskip}{\baselineskip}
%=======================End of my definitions======================

\begin{document}

\title{On the Line-Separable Unit-Disk Coverage and Related Problems\thanks{A preliminary version of this paper appeared in {\em Proceedings of the 34th International Symposium on Algorithms and Computation (ISAAC 2023)}. This research was supported in part by NSF under Grants CCF-2005323 and CCF-2300356.}}
\author{Gang Liu %\inst{1}
\and
Haitao Wang %\inst{1}
}

 \institute{
 Kahlert School of Computing\\
  University of Utah, Salt Lake City, UT 84112, USA\\
  \email{u0866264@utah.edu, haitao.wang@utah.edu}
}

\maketitle

\pagestyle{plain}
\pagenumbering{arabic}
\setcounter{page}{1}

%\vspace{-0.2in}
\begin{abstract}
Given a set $P$ of $n$ points and a set $S$ of $m$ disks in the plane, the disk coverage
problem asks for a smallest subset of disks that together cover all points of $P$.
The problem is NP-hard. In this paper, we consider a line-separable unit-disk version of the problem where all disks have the same radius and their centers are separated from the points of $P$ by a line $\ell$. We present an $O((n+m)\log(n+m))$ time algorithm for the problem. This improves the previously best result of $O(nm+ n\log n)$ time. Our techniques also solve the line-constrained version of the problem, where centers of all disks of $S$ are located on a line $\ell$ while points of $P$ can be anywhere in the plane. Our algorithm runs in $O((n+m)\log (m+ n)+m \log m\log n)$ time, which improves the previously best result of $O(nm\log(m+n))$ time. 
In addition, our results lead to an algorithm of $O(n^3\log n)$ time for a half-plane coverage problem (given $n$ half-planes and $n$ points, find a smallest subset of half-planes covering all points); this improves the previously best algorithm of $O(n^4\log n)$ time. Further, if all half-planes are lower ones, our algorithm runs in $O(n\log n)$ time while the previously best algorithm takes $O(n^2\log n)$ time. 
%We obtain similar results for the corresponding hitting set problems as well. 
\end{abstract}

\keywords{disk coverage, line-separable, unit-disk, line-constrained, half-planes}

%\vspace{-0.1in}
\section{Introduction}
\label{sec:intro}

Given a set $P$ of $n$ points and a set $S$ of $m$ disks in the plane, the {\em disk coverage} problem asks for a smallest subset of disks such that every point of $P$ is covered by at least one disk in the subset. 
%The {\em hitting set} problem is to compute a smallest subset of points of $P$ such that every disk in $S$ contains at least one point in the subset.
The problem is NP-hard, even if all disks have the same radius~\cite{ref:FederOp88,ref:MustafaIm10}.  
%\subsection{Related work}
%The disk coverage problem is a NP-hard problem, even if all disks are unit disks with the same weight, as shown in \cite{ref:garey1974some,ref:ChanEx14,ref:Har-PeledWe12}. 
Polynomial time approximation algorithms have been proposed for the problem and many of its variants, e.g.,~\cite{ref:AgarwalNe20,ref:BusPr18,ref:ChanEx14,ref:ChanFa20,ref:GanjugunteGe11,ref:LiA15}. 
%Some approximation algorithms exist for these problems \cite{ref:EvenHi05,ref:MustafaPt09}. If we remove the ``line-separable" condition, our problems become these classical problems that are known to be NP-hard.

Polynomial time exact algorithms are known for certain special cases. If all points of $P$ are inside a strip bounded by two parallel lines and the centers of all disks lie outside the strip, then the problem is solvable in polynomial time~\cite{ref:AmbuhlCo06}.
%Amb\"uhl et al.~\cite{ref:AmbuhlCo06} solved the disk coverage problem in polynomial time; 
If all disks of $S$ contain the same point, polynomial time algorithms also exist~\cite{ref:CalinescuSe04,ref:DasHo10}; in particular, applying the result in~\cite{ref:ChanEx14} (i.e., Corollary 1.7) yields an $O(mn^2(m+n))$ time algorithm.
In order to devise an efficient approximation algorithm for the general coverage problem (without any constraints), the {\em line-separable} version was considered in the literature~\cite{ref:AmbuhlCo06,ref:CarmiCo07,ref:ClaudeAn10}, where disk centers are separated from the points by a given line $\ell$. 
%(both disk centers and points are allowed to be on $\ell$). 
%To solve this line-separable problem, 
A polynomial time $4$-approximation algorithm is given in \cite{ref:CarmiCo07}. Amb\"uhl et al.~\cite{ref:AmbuhlCo06} derived an exact algorithm of $O(m^2n)$ time. 
An improved $O(nm+ n\log n )$ time algorithm is presented in \cite{ref:ClaudeAn10} and another algorithm in \cite{ref:PedersenAl22} runs in $O(n\log n+m^2\log n)$ in the worst case. 

The {\em line-constrained} version of the disk coverage problem has also been studied, where disk centers are on the $x$-axis while points of $P$ can be anywhere in the plane. Pedersen and Wang~\cite{ref:PedersenAl22} considered the weighted case in which each disk has a weight and the objective is to minimize the total weight of the disks in the subset that cover all points. Their algorithm runs in $O((m+n)\log(m+n) + \kappa\log m)$ time, where $\kappa$ is the number of pairs of disks that intersect and $\kappa=O(m^2)$ in the worst case. They reduced the runtime to $O((m+n)\log(m+n))$ for the {\em unit-disk case}, where all disks have the same radius, as well as the $L_{\infty}$ and $L_1$ cases, where the disks are squares and diamonds, respectively~\cite{ref:PedersenAl22}.
The 1D problem where disks become segments on a line and points are on the same line is also solvable in $O((m+n)\log(m+n))$~\cite{ref:PedersenAl22}.
%Liu and Wang~\cite{ref:Liu2023Hit} considered the weighted case of the line-constrained hitting set problem in which each point of $P$ has a weight instead. They gave the same results as above in \cite{ref:PedersenAl22} for the disk coverage, i.e., $O((m+n)\log(m+n) + \kappa\log m)$ time for the general problem and $O((m+n)\log(m+n))$ for the unit-disk, the $L_{\infty}$, and the $L_1$ cases. 
Other types of line-constrained coverage problems have also been studied in the literature, e.g.,~\cite{ref:AltMi06,ref:BiloGe05,ref:BiniazFa18,ref:Lev-TovPo05}.

A related problem is when disks of $S$ are half-planes. %Note that the coverage problem is equivalent to the hitting set problem by duality. 
For the weighted case, Chan and Grant \cite{ref:ChanEx14} proposed an algorithm for the lower-only case where all half-planes are lower ones; their algorithm runs in $O(n^4)$ time when $m=n$. 
With the observation that a half-plane may be considered as a unit disk of infinite radius, the techniques of~\cite{ref:PedersenAl22} solve the problem in $O(n^2\log n)$ time. 
For the general case where both upper and lower half-planes are present, Har-Peled and Lee \cite{ref:Har-PeledWe12} solved the problem in $O(n^5)$ time. Pedersen and Wang~\cite{ref:PedersenAl22} showed that the problem can be reduced to $O(n^2)$ instances of the lower-only case problem and thus can be solved in $O(n^4\log n)$ time.
%For the weighted hitting set problem, observe that the coverage problem is equivalent to the hitting set problem by duality in the lower-only case while this is not true in the general case. Therefore, the hitting set problem in the lower-only case is solvable in $O(n^2\log n)$ time by the coverage algorithm of~\cite{ref:PedersenAl22}. Har-Peled and Lee \cite{ref:Har-PeledWe12} solved the general hitting set problem in $O(n^6)$ time. 
To the best of our knowledge, we are not aware of any previous work particularly on the unweighted half-plane coverage problem. 

\subsection{Our result}

We assume that $\ell$ is the $x$-axis and all disk centers are below or on $\ell$ while all points of $P$ are above or on $\ell$. 
We consider the line-separable version of the disk coverage problem with the following {\em single-intersection condition}: For any two disks, their boundaries intersect at most once in the half-plane above $\ell$. Note that this condition is satisfied in both the unit-disk case (see Fig~\ref{fig:unitcase}) and the line-constrained case (see Fig.~\ref{fig:linecase}; more to explain below). Hence, an algorithm for this line-separable single-intersection case works for both the unit-disk case and the line-constrained case. Note that all problems considered in this paper are unweighted case in the $L_2$ metric. 

\begin{figure}[t]
\begin{minipage}[t]{0.49\textwidth}
\begin{center}
\includegraphics[height=1.0in]{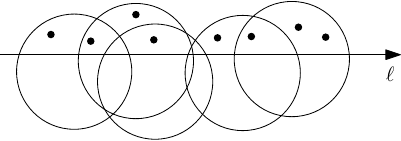}
\caption{\footnotesize Illustrating the line-separable unit-disk case.}
\label{fig:unitcase}
\end{center}
\end{minipage}
\hspace{0.05in}
\begin{minipage}[t]{0.49\textwidth}
\begin{center}
\includegraphics[height=1.0in]{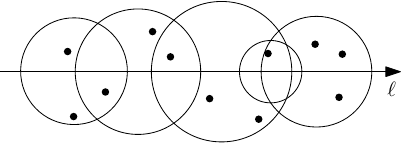}
\caption{\footnotesize Illustrating the line-constrained case (all disks are centred on $\ell$).}
\label{fig:linecase}
\end{center}
\end{minipage}
\vspace{-0.15in}
\end{figure}

For the above line-separable single-intersection problem, we give an algorithm of $O((n+m)\log (m+ n)+m \log m\log n)$ time in Section~\ref{sec:cover}. Based on observations, we find that some disks are ``useless'' and thus can be pruned from $S$. After pruning those useless disks, the remaining disks have certain properties so that we can reduce the problem to the 1D problem, which can then be easily solved. The overall algorithm is fairly simple conceptually. One challenge, however, is to show the correctness, that is, to prove why those ``useless'' disks are indeed useless. The proof is rather lengthy and technical. The bottleneck of the algorithm is to find those useless disks. 
%for which we utilize the cuttings~\cite{ref:ChazelleCu93}.

%and give another algorithm of the same time complexity in Section~\ref{sec:hitting} for the hitting set problem. Note that since disk radii are not the same, the coverage and the hitting set problems are not equivalent to each other. Hence, our algorithms for the two problems are different.\footnote{Note that \cite{ref:DurocherDu15} provides a method to reduce certain coverage problems to instances of the hitting set problem; however, the reduction algorithm, which takes more than $O(n^5)$ time, is not efficient.} 

\paragraph{\bf The line-constrained problem.}
Observe that the line-constrained problem where all disks of $S$ are centered on a line $\ell$ while points of $P$ can be anywhere in the plane is also a special case of the line-separable single-intersection problem. Indeed, for each point $p$ of $P$ below $\ell$, we could replace $p$ by its symmetric point with respect to $\ell$; in this way, we can obtain a set of points that are all above $\ell$. It is not difficult to see that an optimal solution using this new set of points is also an optimal solution for $P$. Furthermore, since disks are centered on $\ell$, although their radii may not be equal, the boundaries of any two disks intersect at most once above $\ell$. 
Hence, the problem is an instance of the line-separable single-intersection case. As such, applying our algorithm in Section~\ref{sec:cover} solves the line-constrained problem in $O((n+m)\log (m+ n)+m \log m\log n)$ time; this improves the previous algorithm in~\cite{ref:PedersenAl22}, which runs in $O(n\log n + m^2\log m)$ time in the worst case.
%time for both the coverage problem~\cite{ref:PedersenAl22} 
%and the hitting set problem~\cite{ref:Liu2023Hit}. 

\paragraph{\bf The unit-disk case.}
To solve the line-separable unit-disk case, the algorithm in Section~\ref{sec:cover} still works. However, by making use of the property that all disks have the same radius, we further improve the runtime to $O((m+n)\log(m+n))$ in Section~\ref{sec:unitdisk}. 
%we further solve both problems for the unit-disk case in $O(m^{2/3}n^{2/3} + (m+n)\log(m+n))$ time, by using recursion and combining the two algorithms in Sections~\ref{sec:cover} and \ref{sec:hitting}. 
This improves the $O(nm+ n\log n )$ time algorithm in \cite{ref:ClaudeAn10} and the $O(n\log n+m^2\log n)$ time algorithm in \cite{ref:PedersenAl22}.
The main idea of the improvement (over the algorithm in~Section \ref{sec:cover}) is to explore the property that all disks have the same radius. 
%in the algorithm (i.e., consider the corresponding problems on the centers of all unit disks of $S$ and the unit disks centered at the points of $P$).  

\paragraph{\bf The half-plane coverage problem.}
As in~\cite{ref:PedersenAl22}, our techniques also solve the half-plane coverage problem. Specifically, for the lower-only case, let $\ell$ be a horizontal line that is below all points of $P$. If we consider each half-plane as a unit disk of
infinite radius with center below $\ell$, then the problem becomes an instance of the line-separable unit-disk coverage problem. Therefore, applying our result leads to an algorithm of $O((m+n)\log(m+n))$ time. When $m=n$, this is $O(n\log n)$ time, improving the previous algorithm of $O(n^2\log n)$ time~\cite{ref:PedersenAl22}. For the general case where both the upper and lower half-plane are present, using the method in~\cite{ref:PedersenAl22} that reduces the problem to $O(n^2)$ instances of the lower-only case, the problem is now solvable in $O(n^2(m+n))\log(m+n))$ time.  When $m=n$, this is $O(n^3\log n)$ time, improving the previous algorithm of $O(n^4\log n)$ time~\cite{ref:PedersenAl22}.

\paragraph{\bf An algorithm in the algebraic decision tree model.} In the algebraic decision tree model, where only comparisons are counted towards the time complexity, combining with a technique recently developed by Chan and Zheng~\cite{ref:ChanHo23},
our method shows that the line-separable single-intersection problem (and thus the line-constrained problem) can be solved using $O((n+m)\log (n+m))$ comparisons. The details are presented at the end of Section~\ref{sec:cover}. In the following discussion, unless otherwise stated, all time complexities are measured in the standard real RAM model. 

\paragraph{\bf Remark.}
The results improve our original results in the conference version of this paper~\cite{ref:LiuOn23-conf}. The high-level algorithm framework (and its correctness proof) is the same as before, but this version provides more efficient algorithm implementations.

\section{Preliminaries}
\label{sec:pre}
In this section, we introduce some concepts and notations that we will use in the rest of the paper. 

We follow the notation defined in Section~\ref{sec:intro}, e.g., $P$, $S$, $m$, $n$, $\ell$. Without loss of generality, we assume that $\ell$ is the $x$-axis and points of $P$ are all above or on $\ell$ while centers of disks of $S$ are all below or on $\ell$. 
%Note that points of $P$ and disk centers are also allowed to be on $\ell$. 
Under this setting, for each disk $s\in S$, only its portion above $\ell$ matters for our coverage problem. 
%for both coverage and hitting set problems. 
Hence, unless otherwise stated, a disk $s$ only refers to its portion above $\ell$. As such, the boundary of $s$ consists of an {\em upper arc}, i.e., the boundary arc of the original disk above $\ell$, and a {\em lower segment}, i.e., the intersection of $s$ with $\ell$. 
Notice that $s$ has a single leftmost (resp., rightmost) point, which is the left (resp., right) endpoint of the lower segment of $s$. 

%Let $l_s$ (resp., $r_s$) denote the left (resp., right) endpoints of the intersection between $s$ and $\ell$. Note that $l_s$ (resp., $r_s$) is the leftmost (resp., rightmost) point of $s$. 

We assume that each point of $P$ is covered by at least one disk since otherwise there would be no feasible solution. 
%similarly, we assume that each disk is hit by at least one point of $P$. 
Our algorithm is able to check whether the assumption is met. 
We make a general position assumption that no point of $P$ lies on the boundary of a disk and no two points of $A$ have the same $x$-coordinate, where $A$ is the union of $P$ and the set of the leftmost and rightmost points of all disks. Degenerated cases can be easily handled by standard perturbation techniques, e.g., \cite{ref:EdelsbrunnerSi90}. 

For any point $p$ in the plane, we denote its $x$- and $y$-coordinates by $x(p)$ and $y(p)$, respectively.
We sort all points of $P$ in ascending order of their $x$-coordinates, resulting in a sorted list ${p_1, p_2,\cdots, p_n}$. 
%We use $P[i,j]$ to denote the subset $\{p_i, p_{i+1},\cdots, p_j\}$. 
%To distinguish the disk that covers point $p_i$ at its leftmost and rightmost, we use the notation $s^l_i$ and $s^r_i$ respectively. The terms ``leftmost disk" and ``rightmost disk" will be further explained in the following part. 
We also sort all disks in ascending order of the $x$-coordinates of their leftmost points, resulting in a sorted list ${s_1, s_2,\cdots, s_m}$. We use $S[i,j]$ to denote the subset $\{s_i, s_{i+1},\cdots, s_j\}$; for convenience, $S[i,j]=\emptyset$ if $i>j$.
For each disk $s_i$, let $l_i$ and $r_i$ denote its leftmost and rightmost points, respectively.
%For any $1\leq i<j\leq m$, we say that $s_i$ is {\em to the left}

For any disk $s$, we use $S_l(s)$ (resp., $S_r(s)$) to denote the set of disks $S$ whose leftmost points are left (resp., right) of that of $s$. 
%and let $S_r(s)$ denote the subset of disks of $S$ not in $S_l(s)\cup \{s\}$. 
As such, if the index of $s$ is $i$, then $S_l(s)=S[1,i-1]$ and $S_r(s)=S[i+1,m]$. If disk $s'\in S_l(s)$, then we also say that $s'$ is {\em to the left} of $s$; similarly, if $s'\in S_r(s)$, then $s'$ is {\em to the right} of $s$.

For a point $p_i\in P$ and a disk $s_k\in S$, we say that $p_i$ is {\em vertically above}  $s_k$ if $p_i$ is outside $s_k$ and $x(l_k) < x(p_i) < x(r_k)$. 
%we define {\em vertically below} similarly.

If $S'$ is a subset of $S$ that forms a coverage of $P$, then we call $S'$ a {\em feasible solution}. 
%for the coverage (resp., hitting set) problem on $S$ and $P$. 
If $S'$ is a feasible solution of minimum size, then $S'$ is an {\em optimal solution}.

\paragraph{\bf The non-containment property.}
Suppose a disk $s_i$ contains another disk $s_j$. Then $s_j$ is redundant for our problem, since any point covered by $s_j$ is also covered by $s_i$. These redundant disks can be easily identified and removed from $S$ in $O(m\log m)$ time (indeed, this is a 1D problem by observing that $s_i$ contains $s_j$ if and only if the lower segment of $s_i$ contains that of $s_j$). Hence, to solve our problem, we first remove such redundant disks and work on the remaining disks. 
%Analogously, for the hitting set problem, if $s_i$ contains $s_j$, then $s_i$ is redundant because any point hit $s_j$ must hit $s_i$ as well. Hence, for solving the hitting set problem, we can first remove such redundant disks and work on the remaining disks. 
For simplicity, from now on we assume that no disk of $S$ contains another. Therefore, $S$ has the following {\em non-containment} property, which our algorithm relies on. 
%which we call the FIFO property.

\begin{observation}{\em (Non-Containment Property)}\label{obser:FIFO}
For any two disks $s_i, s_j \in S$, $x(l_i) < x(l_j)$ if and only if $x(r_i) < x(r_j)$.
\end{observation}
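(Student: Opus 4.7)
The plan is to deduce the observation from two ingredients already in place: the non-containment assumption just imposed on $S$, and the single-intersection condition stipulated in Section~\ref{sec:intro}. By renaming, it suffices to establish the forward implication $x(l_i) < x(l_j) \Rightarrow x(r_i) < x(r_j)$; the converse then follows by swapping the roles of $s_i$ and $s_j$, combined with the general position assumption on $A$ to rule out coincident $x$-coordinates.

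I would proceed by contradiction, assuming $x(l_i) < x(l_j)$ and $x(r_j) < x(r_i)$ (general position excludes the possibility $x(r_j) = x(r_i)$). Then the lower segment of $s_j$ lies strictly inside the lower segment of $s_i$ on $\ell$; in particular, both endpoints $l_j$ and $r_j$ of the upper arc $\gamma$ of $s_j$ lie strictly between $l_i$ and $r_i$ on $\ell$. Since $\gamma$ is strictly above $\ell$ except at its two endpoints, a sufficiently small initial portion of $\gamma$ near each of $l_j$ and $r_j$ is contained in the interior of $s_i$.

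The next step is a short topological case analysis on $\gamma$. If $\gamma \subseteq s_i$, then combined with the fact that the lower segment of $s_j$ lies in the lower segment of $s_i$, we conclude $s_j \subseteq s_i$, contradicting the non-containment assumption. Otherwise $\gamma$ contains a point outside $s_i$; walking along $\gamma$ from $l_j$ to that point and then onward to $r_j$, $\gamma$ must first exit and then re-enter $s_i$. Because $\gamma$ meets $\ell$ only at its two endpoints, each of these two transitions takes place strictly above $\ell$, and hence on the upper arc of $s_i$ (the only part of $\partial s_i$ that lies above $\ell$). Therefore the upper arcs of $s_i$ and $s_j$ share at least two distinct points strictly above $\ell$, contradicting the single-intersection condition.

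I do not anticipate a substantive obstacle; the argument is short and purely topological. The only mild care needed is (i) invoking the general position assumption on $A$ to exclude boundary coincidences between endpoints, and (ii) phrasing the ``exit and re-enter'' step carefully so that each transition of $\gamma$ across $\partial s_i$ genuinely contributes a distinct crossing of the two upper arcs rather than a tangency.
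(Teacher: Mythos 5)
Your proposal is correct and takes essentially the same route as the paper: the paper states the observation without an explicit proof, relying on its preceding remark that (under the single-intersection condition) $s_i$ contains $s_j$ if and only if the lower segment of $s_i$ contains that of $s_j$, so that non-containment of disks forces non-nesting of the lower segments on $\ell$. Your exit-and-re-enter argument is exactly the justification of that remark, with the details the paper leaves implicit filled in.
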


\section{The line-separable single-intersection case}
\label{sec:cover}

In this section, we present our algorithm for the disk coverage problem in the line-separable single-intersection case. We follow the notation defined in Section~\ref{sec:pre}. 

For each disk $s_i\in S$, we define two indices $a(i)$ and $b(i)$ of points of $P$ (where $p_{a(i)}$ and $p_{b(i)}$ are not contained in $s_i$), which are critical to our algorithm. 

\begin{definition}
\begin{itemize}
    \item 
    Among all points of $P$ covered by the union of the disks of $S[1,i-1]$ but not covered by $s_i$, define $a(i)$ to be the largest index of these points; if no such point exists, then let $a(i)=0$. 
    \item
    Among all points of $P$ covered by the union of the disks of $S[i+1,m]$ but not covered by $s_i$, define $b(i)$ to be the smallest index of these points; if no such point exists, then let $b(i)=n+1$.
\end{itemize}
\end{definition}

%According to the definition, $p_j$ is not covered by any disk of $S[1,i-1]$ for any $j>a(i)$; $p_j$ is not covered by any disk of $S[i+1,m]$ for any $j<b(i)$. 
%For convenience, we use $p_0$ (resp., $p_{n+1}$) to refer to a dummy point to the left (resp., right) of all points of $P$ as well as all disks. 

We say that a disk $s_i\in S$ is {\em prunable} if $a(i)\geq b(i)$. 

We now describe our algorithm. Although the algorithm description looks simple, it is quite challenging to prove the correctness; we devote Section~\ref{sec:covercorrect} to it. The implementation of the algorithm, which is also not trivial, is presented in Section~\ref{sec:coverimplement}. 

\paragraph{\bf Algorithm description.}
The algorithm has three main steps. 

\begin{enumerate}
    \item We first find all prunable disks of $S$. 
    %We will show in Section~\ref{sec:coverimplement} that this can be done in $O((n+m)\log(n+m)+m\log m\log n)$ time.  
        Let $S^*$ denote the subset of disks of $S$ that are not prunable. We will prove in Section~\ref{sec:covercorrect} that $S^*$ contains an optimal solution for the coverage problem on $P$ and $S$. This means that it suffices to work on $S^*$ and $P$. 

    \item We then compute $a(i)$ and $b(i)$ for all disks $s_i\in S^*$. We will show in Section~\ref{sec:coverimplement} that this step together with the above first step for computing $S^*$ can be done in $O((n+m)\log (n+m)+m\log m\log n)$ time.  

    \item We reduce the disk coverage problem on $S^*$ and $P$ to a 1D coverage problem as follows. For each point of $P$, we project it vertically onto $\ell$. Let $P'$ be the set of all projected points. For each disk $s_i\in S^*$, we create a line segment on $\ell$ whose left endpoint has $x$-coordinate equal to $x(p_{a(i)+1})$ and whose right endpoint has $x$-coordinate equal to $x(p_{b(i)-1})$ (if $a(i)+1=b(i)$, then let the $x$-coordinate of the right endpoint be $x(p_{a(i)+1})$). Let $S'$ be the set of all the segments thus created. 

    We solve the following 1D coverage problem: Find a minimum subset of the segments of $S'$ that together cover all points of $P'$. This problem can be easily solved in $O((|S'|+|P'|)\log (|S'|+|P'|))$ time~\cite{ref:PedersenAl22},\footnote{The algorithm in \cite{ref:PedersenAl22}, which uses dynamic programming, is for the weighted case where each segment has a weight. Our problem is simpler since it is an unweighted case. We can use a simple greedy algorithm to solve it.} which is $O((m+n)\log(m+n))$ since $|P'|=n$ and $|S'|\leq m$. 

    Suppose $S'_1$ is any optimal solution to the above 1D coverage problem. We create a subset $S_1$ of $S^*$ as follows. For each segment of $S'_1$, suppose it is created from a disk $s_i\in S^*$; then we add $s_i$ to $S_1$. We will prove in Section~\ref{sec:covercorrect} that $S_1$ is an optimal solution to the coverage problem for $S^*$ and $P$.
\end{enumerate}

We summarize the result in the following theorem. 

\begin{theorem}\label{theo:coverage}
Given a set $P$ of $n$ points and a set $S$ of $m$ disks in the plane such that the disk centers are separated from points of $P$ by a line, and the single-intersection condition is satisfied, 
the disk coverage problem for $P$ and $S$ is solvable in $O((n+m)\log (m+ n)+m \log m\log n)$ time.
\end{theorem}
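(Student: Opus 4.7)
The plan is to verify the theorem by establishing correctness of the three-step algorithm and then accounting for the runtime of each step. Correctness breaks into two independent claims: (i) the subset $S^*$ of non-prunable disks contains some optimal solution to the coverage problem on $(P,S)$; and (ii) an optimal solution of the constructed 1D segment-cover instance on $(S',P')$ pulls back to an optimal solution of the disk-cover instance on $(S^*,P)$. Once (i) and (ii) are in hand, the runtime just aggregates the bounds of Steps 1--3.

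For (i), the plan is an exchange argument. Take any optimal solution $\opt\subseteq S$ and suppose it contains a prunable disk $s_i$, i.e.\ $a(i)\ge b(i)$. By the definitions of $a(i)$ and $b(i)$, there exist disks $s_j\in S[1,i-1]$ covering $p_{a(i)}$ and $s_k\in S[i+1,m]$ covering $p_{b(i)}$. I would remove $s_i$ from $\opt$, add $s_j$ and $s_k$ if not already present, and argue that every point of $P$ formerly covered by $s_i$ remains covered. The geometric heart of the argument uses the single-intersection condition together with the non-containment property (Observation~\ref{obser:FIFO}) to show that, because $j<i<k$ in the leftmost-endpoint order and $b(i)\le a(i)$, the union $s_j\cup s_k$ covers the portion of $s_i$ above $\ell$ on the relevant index range of points. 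Since $|\opt|$ does not strictly grow, iterating over prunable disks yields an optimal solution inside $S^*$. Making this exchange rigorous, especially controlling points covered by $s_i$ but neither by $s_j$ nor $s_k$, is the main obstacle and is what the authors describe as ``lengthy and technical''.

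For (ii), I would establish a size-preserving two-way correspondence between coverings. In the easier direction, if $T\subseteq S^*$ covers $P$ and $p_k\in P$ is covered by $s_i\in T$, then by the definitions of $a(i)$ and $b(i)$ one deduces $a(i)<k<b(i)$, so the segment created from $s_i$ covers the projection of $p_k$, making the corresponding segment set cover $P'$. In the reverse direction, I need to show that any segment cover of $P'$ lifts to a disk cover of $P$ of the same size; the key observation is that for each $s_i\in S^*$ the index range $(a(i),b(i))$ captures exactly the points that $s_i$ is needed to account for in some optimal solution, so a 1D cover translates into coverage of each $p_k$ either directly by the chosen $s_i$ or by a left/right neighbour that the correspondence forces into the solution. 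Since the correspondence preserves cardinality, optima map to optima on both sides.

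For the runtime, Step~3 is a classical 1D interval-cover problem on at most $m$ segments and $n$ points, solvable in $O((m+n)\log(m+n))$ time by a greedy sweep, as quoted from~\cite{ref:PedersenAl22}. Steps~1 and~2 are handled jointly in $O((n+m)\log(n+m)+m\log m\log n)$ time, as will be shown in Section~\ref{sec:coverimplement}; at a high level the plan there is to sweep the disks in index order, maintaining a structure over the points already covered by processed disks (to extract $a(i)$) and a symmetric structure for unprocessed disks (to extract $b(i)$), with each update/query costing $O(\log m\log n)$. Summing the bounds of the three steps yields the stated complexity and completes the proof.
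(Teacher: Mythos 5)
Your overall architecture (prune to $S^*$, reduce to a 1D interval cover, aggregate the step costs) matches the paper's, but your plan for claim (i) has a genuine flaw. Removing a prunable disk $s_i$ from $\opt$ and adding both $s_j$ and $s_k$ ``if not already present'' can increase $|\opt|$ by one, so the assertion that $|\opt|$ does not strictly grow does not follow; an optimal solution gives you no slack for a one-for-two trade. The paper's exchange is strictly one-for-one: by Lemma~\ref{lem:10} a prunable $s_i$ admits a \emph{single} witness point $\hp$ outside $s_i$ that is covered both by a disk in $S_l(s_i)$ and by a disk in $S_r(s_i)$; since $\opt$ covers $\hp$ and $s_i$ does not, $\opt\setminus\{s_i\}$ already contains one of the two sandwiching disks, and by Observation~\ref{obser:contain} adding only the disk on the other side restores feasibility at no net cost. (Note also that your two witnesses $p_{a(i)}$ and $p_{b(i)}$ may be distinct points; concluding $s_i\subseteq s_j\cup s_k$ requires a common point outside $s_i$ covered by both disks, which is exactly what Lemma~\ref{lem:10} supplies.) Even with the one-for-one exchange in hand, the replacement disk may itself be prunable, and the real technical burden --- which your plan does not address at all --- is proving that the resulting iterative replacement process never revisits a disk and hence terminates; this non-cycling argument is the inductive-hypothesis machinery occupying most of the proof of Lemma~\ref{lem:coverprune}.

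For claim (ii), the direction ``a 1D cover lifts to a disk cover of the same size'' silently requires that every point $p_k$ with $a(i)<k<b(i)$ actually lies inside $s_i$ (the paper's Lemma~\ref{lem:30}, which holds because any disk covering such a $p_k$ from the left or the right would force $a(i)\geq k$ or $b(i)\leq k$). Your phrasing --- that such a $p_k$ might instead be covered ``by a left/right neighbour that the correspondence forces into the solution'' --- suggests this fact is missing from your argument, and without it the lift is not size-preserving. The runtime accounting is acceptable as a deferral to the implementation section, though the sweep-based scheme you sketch for Steps 1--2 is not obviously within the stated bound; the paper attains the $O(m\log m\log n)$ term via a segment tree over disk indices whose nodes store farthest-point Voronoi diagrams of the associated point sets, queried by point location from each disk center.
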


\paragraph{\bf The unit-disk case.} In Section~\ref{sec:unitdisk}, we will reduce the time to $O((n+m)\log (n+m))$ for the unit-disk case. The algorithm follows the above except that we implement the first two steps in a more efficient way (i.e., in $O((n+m)\log (n+m))$ time) by utilizing the property that all disks have the same radius. 

\subsection{Algorithm correctness}
\label{sec:covercorrect}

%For any disk $s\in S$, we use $S_l(s)$ (resp., $S_r(s)$) to denote the subset of all disks of $S$ to the left (resp., right) of $s$. 
We now prove the correctness of our algorithm. Lemma~\ref{lem:coverprune} justifies the correctness of the first main step. To prove Lemma~\ref{lem:coverprune}, whose proof is lengthy and technical, we first prove the following Lemma~\ref{lem:10} (which will also be useful in our algorithm implementation in Section~\ref{sec:coverimplement} for finding all prunable disks). Recall the definition of $S_l(s)$ and $S_r(s)$ in Section~\ref{sec:pre}. 

\begin{lemma}\label{lem:10}
A disk $s$ is prunable if and only if there exists a point in $P$ that is outside $s$ but is covered by both a disk in $S_l(s)$ and a disk in $S_r(s)$. 
\end{lemma}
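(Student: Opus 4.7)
The plan is to handle the two directions separately, with almost all the work going into the forward direction. The reverse direction I would dispatch in one line: if a point $p_k\in P$ lies outside $s=s_i$ and is covered by disks from both $S_l(s)$ and $S_r(s)$, then $k$ satisfies the defining conditions of both $a(i)$ and $b(i)$, so $b(i)\le k\le a(i)$ and $s$ is prunable.

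For the forward direction, I would start from $a(i)\ge b(i)\ge 1$, which ensures that both $p_{a(i)}$ and $p_{b(i)}$ exist, and pick $s_{j_l}\in S_l(s_i)$ covering $p_{a(i)}$ and $s_{j_r}\in S_r(s_i)$ covering $p_{b(i)}$, with both points outside $s_i$. The sub-case $a(i)=b(i)$ is immediate since then $p_{a(i)}=p_{b(i)}$ is already the desired witness. The heart of the argument is the sub-case $x(p_{a(i)})>x(p_{b(i)})$, where I would aim to show that either $p_{b(i)}\in s_{j_l}$ or $p_{a(i)}\in s_{j_r}$; either conclusion produces a point of $P$ outside $s_i$ simultaneously covered by disks from both $S_l(s_i)$ and $S_r(s_i)$.

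Geometrically I would set the stage as follows. Non-containment (Observation~\ref{obser:FIFO}) gives $l_{j_l}<l_i<l_{j_r}$ and $r_{j_l}<r_i<r_{j_r}$. Since $p_{a(i)}$ lies in the cap of $s_{j_l}$ with $x$-range $[l_{j_l},r_{j_l}]$ and $p_{b(i)}$ in the cap of $s_{j_r}$ with $x$-range $[l_{j_r},r_{j_r}]$, a short check rules out the case $r_{j_l}<l_{j_r}$ (which would force $x(p_{a(i)})\le r_{j_l}<l_{j_r}\le x(p_{b(i)})$, contradicting $x(p_{a(i)})>x(p_{b(i)})$). Hence the two $x$-ranges overlap in $[l_{j_r},r_{j_l}]$, and both $x(p_{a(i)})$ and $x(p_{b(i)})$ lie in that overlap. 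Writing $f_{j_l}(x)$ and $f_{j_r}(x)$ for the heights of the two upper arcs, one has $f_{j_l}>f_{j_r}=0$ at $x=l_{j_r}$ and $f_{j_l}=0<f_{j_r}$ at $x=r_{j_l}$, so the single-intersection condition forces the arcs to cross at exactly one $x=x^*$ in the overlap, with $f_{j_l}>f_{j_r}$ to the left of $x^*$ and $f_{j_l}<f_{j_r}$ to its right.

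The main obstacle is extracting a contradiction from this crossing. I would argue by contradiction, assuming both $p_{b(i)}\notin s_{j_l}$ and $p_{a(i)}\notin s_{j_r}$. Then $p_{b(i)}\in s_{j_r}$ lying above the arc of $s_{j_l}$ yields $f_{j_r}(x(p_{b(i)}))\ge y(p_{b(i)})>f_{j_l}(x(p_{b(i)}))$, forcing $x(p_{b(i)})>x^*$; the symmetric chain at $p_{a(i)}$ forces $x(p_{a(i)})<x^*$. Combined with $x(p_{a(i)})\ge x(p_{b(i)})$, this gives the absurd $x^*>x(p_{a(i)})\ge x(p_{b(i)})>x^*$. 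Hence $p_{b(i)}\in s_{j_l}$ or $p_{a(i)}\in s_{j_r}$ must hold, completing the forward direction.
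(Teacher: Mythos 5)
Your proposal is correct and follows essentially the same route as the paper: the reverse direction is the same one-line observation, and the forward direction rests on the same ingredients (non-containment to order the four endpoints, the single-intersection condition to produce a unique crossing of the two upper arcs, and the positions of $p_{a(i)}$ and $p_{b(i)}$ relative to that crossing). The only difference is presentational --- the paper cases on whether the left disk covers $p_{b(i)}$ and argues the remaining case directly, while you establish the disjunction ``$p_{b(i)}\in s_{j_l}$ or $p_{a(i)}\in s_{j_r}$'' by a symmetric contradiction; both arrive at the same witness point.
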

\begin{proof}
Let $i$ be the index of $s$, i.e., $s=s_i$. Our goal is to show that $s_i$ is prunable if and only if there exists a point in $P$ that is outside $s_i$ but is covered by both a disk in $S[1,i-1]$ and a disk in $S[i+1,m]$. 

\paragraph{\bf The ``if'' direction.}
Suppose $P$ has a point $p_t$ that is outside $s_i$ but is covered by a disk $s_k$ and a disk $s_j$ with $k<i<j$. Our goal is to prove that $a(i)\geq b(i)$, meaning that $s_i$ is prunable by definition. 
%Let $h$ be the index of $p^*$. 
Since $s_k$ covers $p_t$ and $k<i$, by definition we have $a(i)\geq t$. On the other hand, since $s_j$ covers $p_t$ and $i<j$, by definition we have $b(i)\leq t$. As such, we obtain $a(i)\geq b(i)$. 

\paragraph{\bf The ``only if'' direction.}
%We now prove the other direction of the lemma. 
Suppose $s_i$ is a prunable disk. Our goal is to show that there exists a point $p^*\in P$ that is outside $s_i$ but covered by both a disk in $S[1,i-1]$ and a disk in $S[i+1,m]$. 
Since $s_i$ is a prunable disk, we have $a(i)\geq b(i)$, and further, there are a disk $s_k$ with $k<i$ that covers $p_{a(i)}$ and a disk $s_j$ with $j>i$ that covers $p_{b(i)}$. Depending on whether $s_k$ covers $p_{b(i)}$, there are two cases.  

\begin{figure}[t]
\begin{minipage}[t]{0.49\textwidth}
\begin{center}
\includegraphics[height=1.0in]{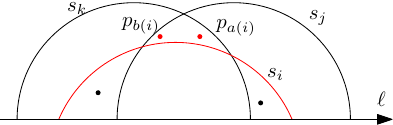}
\caption{\footnotesize Illustrating the case where $p_{b(i)}\in s_k$.}
\label{fig:prunable}
\end{center}
\end{minipage}
\hspace{0.05in}
\begin{minipage}[t]{0.49\textwidth}
\begin{center}
\includegraphics[height=1.0in]{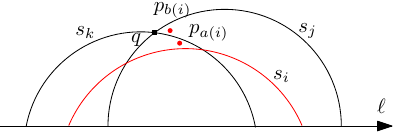}
\caption{\footnotesize Illustrating the case where $p_{b(i)}\not\in s_k$.}
\label{fig:prunable10}
\end{center}
\end{minipage}
\vspace{-0.15in}
\end{figure}

\begin{enumerate}
    \item 
    If $s_k$ covers $p_{b(i)}$ (see Fig.~\ref{fig:prunable}), then $p_{b(i)}$ is covered by both $s_k$ and $s_j$. Since $k<i$ and $j>i$, we can use $p_{b(i)}$ as our target point $p^*$. 
    \item 
    If $s_k$ does not cover $p_{b(i)}$ (see Fig.~\ref{fig:prunable10}), then since $s_k$ covers $p_{a(i)}$, we have $a(i)\neq b(i)$ and thus $a(i)>b(i)$. Since $k<j$, $s_k$ covers $p_{a(i)}$, and $s_j$ covers $p_{b(i)}$, due to the non-containment property of $S$, we have $x(l_k)<x(l_j)<x(p_{b(i)})< x(p_{a(i)})<x(r_k)<x(r_j)$, implying that the upper arcs of $s_k$ and $s_j$ must intersect, say, at at point $q$ (see Fig.~\ref{fig:prunable10}). As $p_{b(i)}\not\in s_k$, $p_{b(i)}$ must be vertically above $s_k$. This implies that $x(q)< x(p_{b(i)})$ must hold. Hence, the region of $s_k$ to the right of $q$ must be inside $s_j$. Since $x(q)<x(p_{b(i)})< x(p_{a(i)})$ and $p_{a(i)}$ is in $s_k$,  $p_{a(i)}$ must be in $s_j$ as well. Therefore, $p_{a(i)}$ is in both $s_k$ and $s_j$. As such, we can use $p_{a(i)}$ as our target point $p^*$.
\end{enumerate}

The lemma thus follows. 
\qed
\end{proof}

The following observation, which follows immediately from the non-containment property of $S$, is needed in the proof of Lemma~\ref{lem:coverprune}. 

\begin{observation}\label{obser:contain}
For any disk $s$ and a point $p$ outside $s$, if $p$ is covered by both a disk $s_i\in S_l(s)$ and a disk $s_j$ in $S_r(s)$, then $s\subseteq s_i\cup s_j$ (see Fig.~\ref{fig:coverunion}).     
\end{observation}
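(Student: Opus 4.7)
The plan is to show that the upper arcs of $s_i$ and $s_j$ jointly form a canopy that lies weakly above the upper arc of $s$ across the entire horizontal extent of $s$, from which the containment $s \subseteq s_i \cup s_j$ is immediate. First I would apply the non-containment property (Observation~\ref{obser:FIFO}) to the triple $s_i, s, s_j$ to derive the ordering $x(l_i) < x(l) < x(l_j)$ and $x(r_i) < x(r) < x(r_j)$; the existence of a common point $p \in s_i \cap s_j$ above $\ell$ forces the lower segments of $s_i$ and $s_j$ to overlap, and hence $x(l_j) < x(r_i)$ (the inequality is strict by the general position assumption). The endpoints $l$ and $r$ of the lower segment of $s$ then lie inside $s_i$ and $s_j$ respectively, so the ``floor'' of $s$ on $\ell$ is already covered by $s_i \cup s_j$.

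Next I would argue that the upper arcs of $s_i$ and $s_j$ must cross exactly once above $\ell$. At $x = x(l_j)$ the arc of $s_j$ touches $\ell$ while the arc of $s_i$ is strictly above $\ell$ (since $x(l_i) < x(l_j) < x(r_i)$), and at $x = x(r_i)$ the situation reverses. Continuity forces a crossing, and the single-intersection condition forces uniqueness, yielding a point $q^*$ whose $x$-coordinate lies strictly between $x(l)$ and $x(r)$.

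The main obstacle is to show that $q^*$ lies on or above the upper arc of $s$. I plan to argue this by contradiction: if $q^*$ were strictly interior to $s$, then tracing the upper arc of $s_i$ from $l_i$ to $r_i$ one would see it enter $s$ (it starts at $x(l_i) < x(l)$, i.e., outside the horizontal range of $s$, and reaches $q^*$ which is assumed inside $s$) and later exit $s$ (forced by $p \in s_i \setminus s$ to leave $s$ again before returning to $\ell$ at $r_i$), producing two distinct crossings of the upper arcs of $s_i$ and $s$ above $\ell$; this contradicts the single-intersection condition applied to the pair $(s_i, s)$. The hypothesis $p \notin s$ is precisely what rules out the degenerate configuration in which $s_i$'s upper arc stays entirely inside $s$.

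Finally, a vertical-slice argument concludes the proof. For each $x \in [x(l), x(q^*)]$ the upper arc of $s_i$ lies weakly above the upper arc of $s$, for otherwise a second crossing of their upper arcs above $\ell$ would appear, again violating single-intersection; hence the vertical slice of $s$ at abscissa $x$ is contained in $s_i$. Symmetrically, for $x \in [x(q^*), x(r)]$ the slice of $s$ is contained in $s_j$. Combining the two ranges with the floor already handled in the first paragraph yields $s \subseteq s_i \cup s_j$, as claimed.
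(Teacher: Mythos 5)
Your overall strategy --- show that the upper arcs of $s_i$ and $s_j$ form a canopy over $s$, split at the unique crossing $q^*$ of those two arcs --- is sound, and your first, second, and fourth paragraphs check out (the paper itself gives no argument here, asserting the observation is immediate from non-containment, so your third paragraph is the real content). That third paragraph, however, has a gap. You claim that if $q^*$ were strictly interior to $s$, the upper arc of $s_i$, having entered $s$ before $q^*$, is ``forced by $p\in s_i\setminus s$ to leave $s$ again before returning to $\ell$ at $r_i$.'' This is not forced. Note first that $r_i$ itself lies in $s$: from $x(l)<x(l_j)\le x(p)\le x(r_i)<x(r)$, the point $r_i$ is interior to the lower segment of $s$, so the arc of $s_i$ may perfectly well end inside $s$ with no second exit. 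The only witness you have for that arc being outside $s$ is the abscissa $x(p)$, where the arc of $s_i$ lies above $p$, which lies above the arc of $s$. That witness yields a second crossing of the arcs of $s_i$ and $s$ only when $x(p)>x(q^*)$; if $x(p)<x(q^*)$, the single allowed crossing of the pair $(s_i,s)$ can sit in $(x(p),x(q^*))$, the arc of $s_i$ stays inside $s$ from there to $r_i$, and nothing is contradicted by looking at $s_i$ alone.

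The fix is routine but must be stated: when $x(p)<x(q^*)$, run the mirror argument on $s_j$ --- its arc is outside $s$ at $x(p)$ (it lies above $p$), inside $s$ at $q^*$ by assumption, and outside $s$ again at $x=x(r)$ (where it has positive height while the arc of $s$ meets $\ell$), producing two crossings of the arcs of $s_j$ and $s$ strictly above $\ell$. (The boundary case $x(p)=x(q^*)$ is vacuous, since then both arcs pass through $q^*$ at height at least $y(p)$, which already places $q^*$ outside $s$.) Alternatively, you can dispense with $q^*$ and the contradiction entirely: let $q_i$ (resp.\ $q_j$) be the unique crossing of the upper arc of $s_i$ (resp.\ $s_j$) with the upper arc of $s$; these exist by the endpoint comparisons you already made, and since at $x(p)$ both arcs lie strictly above the arc of $s$, one reads off $x(q_j)<x(p)<x(q_i)$ directly. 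The slabs $[x(l),x(q_i)]$ and $[x(q_j),x(r)]$ then cover $[x(l),x(r)]$, and your vertical-slice argument finishes the proof with no case analysis.
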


\begin{figure}[h]
\begin{minipage}[t]{\textwidth}
\begin{center}
\includegraphics[height=1.0in]{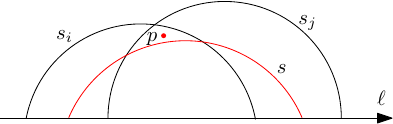}
\caption{\footnotesize Illustrating Observation~\ref{obser:contain}.}
\label{fig:coverunion}
\end{center}
\end{minipage}
\vspace{-0.15in}
\end{figure}

%We next prove the following lemma, which justifies the second step of our algorithm. 

\begin{lemma}\label{lem:coverprune}
$S^*$ contains an optimal solution to the coverage problem on $S$ and $P$. 
\end{lemma}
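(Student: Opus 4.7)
The plan is to prove Lemma~\ref{lem:coverprune} by an exchange argument. Starting from an arbitrary optimal solution $\opt$, I will show that whenever $\opt$ contains a prunable disk one may swap that disk for another disk, obtaining a new optimal solution of the same size whose ``exchange potential'' strictly decreases; finite iteration of the swap then yields an optimal solution contained in $S^*$.

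Fix a prunable disk $s_i\in \opt$. By Lemma~\ref{lem:10} there exist witnesses $s_L\in S_l(s_i)$ and $s_R\in S_r(s_i)$ together with a point $p^*\in P$ that lies outside $s_i$ but inside $s_L\cap s_R$. Observation~\ref{obser:contain} then gives $s_i\subseteq s_L\cup s_R$, so every point of $P\cap s_i$ is already covered by $s_L\cup s_R$. Denote by $Q\subseteq P\cap s_i$ the subset of points that are covered by $s_i$ and by no other disk of $\opt$; these are exactly the points whose coverage could be lost if $s_i$ were removed from $\opt$.

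The argument then proceeds by case analysis on the membership of the witnesses in $\opt$. First, $s_L$ and $s_R$ cannot both lie in $\opt$, for then $\opt\setminus\{s_i\}$ would be feasible, contradicting optimality. Next, if exactly one of the two witnesses, say $s_L$, lies in $\opt$, then $Q$ is disjoint from $s_L$ (else the offending point of $Q\cap s_L$ would already be covered by $s_L\in \opt\setminus\{s_i\}$), hence $Q\subseteq s_R$; then $(\opt\setminus\{s_i\})\cup\{s_R\}$ is an optimal solution of the same size. Finally, when neither witness lies in $\opt$, I claim that one can always refine the choice of witnesses so that $Q\subseteq s_L$ or $Q\subseteq s_R$, giving again a single-disk replacement.

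The main obstacle is this last sub-case: a priori, $Q$ could have a point in $s_L\setminus s_R$ and another in $s_R\setminus s_L$. To rule this out, I would exploit the single-intersection condition, which forces the upper arcs of $s_L$ and $s_R$ to meet at most once above $\ell$; combined with the non-containment property of $S$, this places $s_i\setminus s_R$ and $s_i\setminus s_L$ on opposite sides of a common vertical line. Using this geometry, one can swap $s_L$ (or $s_R$) for an alternative witness in $S_l(s_i)$ or $S_r(s_i)$ that still covers $p^*$, or substitute $p^*$ for a different suitable witness point such as $p_{a(i)}$ or $p_{b(i)}$ (as constructed in the proof of Lemma~\ref{lem:10}), until the split side of $Q$ disappears. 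Termination of the global process is ensured by an exchange potential such as the number of prunable disks in $\opt$, with ties broken by the lexicographic sorted index vector of $\opt$; each exchange strictly decreases this potential.
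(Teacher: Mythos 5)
Your overall framework (repeatedly swapping a prunable disk of an optimal solution for another disk, using Lemma~\ref{lem:10} and Observation~\ref{obser:contain}) matches the paper's, and your first two cases are sound. But you have misplaced the difficulty, and the step you leave vague is the one that actually fails. The ``split $Q$'' sub-case that you flag as the main obstacle is in fact easy: since $p^*\notin s_i$ and $\opt$ is feasible, some disk $s'\in\opt\setminus\{s_i\}$ covers $p^*$; as $s'$ lies in $S_l(s_i)$ or $S_r(s_i)$, you may simply take $s'$ as the witness on its side (the witnesses of Lemma~\ref{lem:10} are not canonical, and Observation~\ref{obser:contain} applies to any valid pair). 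This puts you back in your second case with one witness in $\opt$, and is exactly how the paper's first iteration proceeds.

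The genuine gap is termination. The replacement disk ($s_L$ or $s_R$) may itself be prunable, so a swap need not reduce the number of prunable disks in the solution; and because a swap can move either left (when the in-$\opt$ witness is on the right) or right (when it is on the left), the lexicographically sorted index vector of the solution is not monotone either --- replacing $s_i$ by a disk of larger index increases it, by a smaller index decreases it. So the potential you propose is neither shown to decrease nor does it in general, and without a decreasing potential the swap process could revisit a disk and cycle. Ruling this out is where essentially all of the work in the paper's proof lies: it fixes $\opt'=\opt\setminus\{\hs_1\}$ once, iterates over candidate disks $\hs_1,\hs_2,\ldots$ while maintaining an auxiliary set $A_k\subseteq\opt'$ with the invariant $\hs_k\subseteq\calR(A_k)\cup\hs_{k+1}$, and proves that every new candidate is distinct from all previous ones. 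The hard sub-case (when the new witness point $\hp_i$ is already covered by $\calR(A_{i-1})$) requires a genuinely geometric argument: optimality of $\opt$ yields a point $q'$ on the upper arc of the current candidate not covered by $\opt'$, and the single-intersection and non-containment properties then show that no earlier candidate can contain $\hp_i$. Your proposal contains no substitute for this argument, so the proof is incomplete.
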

\begin{proof}
Let $\opt$ be an optimal solution. Let $Q$ be the set of all prunable disks, i.e., $Q=S\setminus S^*$. If $\opt\cap Q=\emptyset$, then $\opt\subseteq S^*$ and thus the lemma trivially follows. In what follows, we assume that $|\opt\cap Q|\geq 1$. 

Pick an arbitrary point from $\opt\cap Q$, denoted $\hs_1$. Below, we give a process that can find a disk $s^*$ from $S^*$ to replace $\hs_1$ in $\opt$ such that the new set $\opt^1=\{s^*\}\cup \opt\setminus\{\hs_1\}$ still forms a coverage of $P$ (i.e., $\opt^1$ is a feasible solution), implying that $\opt^1$ is also an optimal solution since $|\opt^1|=|\opt|$. As $p^*\in S^*$, we have $|\opt^1\cap Q|=|\opt\cap Q|-1$. If $\opt^1\cap Q$ is still nonempty, then we can repeat the above process for other points in $\opt^1\cap Q$ until we obtain an optimal solution $\opt^*$ with $\opt^*\cap Q=\emptyset$, which will prove the lemma. 

We now give a process to find a target disk $s^*$. The process involves induction. To help the reader understand it better, we first provide details of the first two iterations of the process (we will introduce some notation that appears unnecessary for the first two iterations, but these will be needed when we describe the induction). 

\paragraph{\bf The first iteration.}
Let $\opt'=\opt\setminus\{\hs_1\}$. Since $\hs_1\in Q$, by Lemma~\ref{lem:10}, $P$ has a point $\hp_1$ outside $\hs_1$ but is covered by a disk $\hs_1^l\in S_l(\hs_1)$ and a disk $\hs_1^r\in S_r(\hs_1)$. By Observation~\ref{obser:contain}, $\hs_1\subseteq \hs_1^l\cup \hs_1^r$. Since $\hp_1$ is outside $\hs_1$ and $\opt = \opt'\cup \{\hs_1\}$ forms a coverage of $P$, $\opt'$ must have a disk $s$ that covers $\hp_1$. Clearly, $s$ is either in $S_l(\hs_1)$ or in $S_r(\hs_1)$. Without loss of generality, we assume that $s\in S_r(\hs_1)$. Since $\hs_1^r$ refers to an arbitrary disk of $S_r(\hs_1)$ that covers $\hp_1$ and $s$ is also a disk of $S_r(\hs_1)$ that covers $\hp_1$, for notational convenience, we let $\hs_1^r$ refer to $s$. As such, $\hs_1^r$ is in $\opt'$. 

Consider the disk $\hs_1^l$. Since $\hs_1\subseteq \hs_1^l\cup \hs_1^r$ and $\hs_1^r$ is in $\opt'$, it is not difficult to see that the area covered by the union of the disks of $\opt$ is contained in the area covered by the union of the disks of $\opt'\cup \{\hs_1^l\}$ and thus $\opt'\cup \{\hs_1^l\}$ is a feasible solution. As such, if $\hs_1^l\not\in Q$, then we can use $\hs_1^l$ as our target disk $s^*$ and our process (for finding $s^*$) is done. In what follows, we assume $\hs_1^l\in Q$. 

For any subset $S'$ of $S$, we define $\calR(S')$ as the area covered by the union of the disks of $S'$, i.e., $\calR(S')=\bigcup_{s\in S'}s$. 

We let $\hs_2=\hs_1^l$. Define $A_1=\{\hs_1^r\}$. According to the above discussion, we have $A_1\subseteq \opt'$, $\hs_1\subseteq \calR(A_1)\cup \hs_2$, and $\opt'\cup \{\hs_2\}$ is a feasible solution.  

\paragraph{\bf The second iteration.}
We are now entering the second iteration of our process. First notice that $\hs_2$ cannot be $\hs_1$ since $\hs_2=\hs_1^l$, which cannot be $\hs_1$. Our goal in this iteration is to find a {\em candidate disk} $s'$ to replace $\hs_2$ so that $\opt'\cup \{s'\}$ also forms a coverage of $P$. Consequently, if $s'\not\in Q$, then we can use $s'$ as our target $s^*$; otherwise, we need to guarantee $s'\neq \hs_1$ so that our process does not enter a dead loop. 
%As such, we will show next how to find a candidate disk $s'$ such that $opt'\cup \{s'\}$ forms a coverage of $P$ and $s'\neq \hs_1$. Hence, 
The discussion here is more involved than in the first iteration. 
%(simply repeating the argument as in the first iteration would not work). 

Since $\hs_2\in Q$, by Lemma~\ref{lem:10}, $P$ has a point $\hp_2$ outside $\hs_2$ but is covered by a disk $\hs_2^l\in S_l(\hs_2)$ and a disk $\hs_2^r\in S_r(\hs_2)$. By Observation~\ref{obser:contain}, $\hs_2\subseteq \hs_2^l\cup \hs_2^r$.
Depending on whether $\hp_2$ is in $\calR(A_1)$, there are two cases. 

\begin{itemize}
    \item 
If $\hp_2\not\in \calR(A_1)$, then since $\hp_2\not\in \hs_2$ and $\hs_1\subseteq \calR(A_1)\cup \hs_2$, we obtain $\hp_2\not\in \hs_1$. We can now basically repeat our argument in the first iteration. 
Since $\hp_2$ is outside $\hs_2$ and $\opt'\cup \{\hs_2\}$ is a feasible solution, $\opt'$ must have a disk $s$ that covers $\hp_2$. Clearly, $s$ is either in $S_l(\hs_2)$ or in $S_r(\hs_2)$. Without loss of generality, we assume that $s\in S_r(\hs_2)$. Since $\hs_2^r$ refers to an arbitrary disk of $S_r(\hs_2)$ that covers $\hp_2$ and $s$ is also a disk of $S_r(\hs_2)$ that covers $\hp_2$, for notational convenience, we let $\hs_2^r$ refer to $s$. As such, $\hs_2^r$ is in $\opt'$. 

We let $\hs_2^l$ be our candidate disk, which satisfies our need as discussed above for $s'$. Indeed, since $\opt'\cup \{\hs_2\}$ is an optimal solution, $\hs_2\subseteq \hs_2^l\cup \hs_2^r$, and $\hs_2^r\in \opt'$, we obtain that $\opt'\cup \{\hs_2^l\}$ also forms a coverage of $P$. Furthermore, since $\hs_2^l$ contains $\hp_2$ while $\hs_1$ does not, we know that $\hs_2^l\neq \hs_1$. Therefore, if $\hs_2^l\not\in Q$, then we can use $\hs_2^l$ as our target $s^*$ and we are done with the process. Otherwise, we let $\hs_3=\hs_2^l$ and then enter the third iteration argument. In this case, we let $A_2=A_1\cup \{\hs_2^r\}$. According to our above discussion, we have $A_2\subseteq \opt'$, $\hs_2\subseteq \calR(A_2)\cup \hs_3$, and $\{\hs_3\}\cup \opt'$ is a feasible solution.  

\item 
If $\hp_2\in \calR(A_1)$, then we let $\hs_2^l$ be our candidate disk. We show in the following that it satisfies our need as discussed above for $s'$, i.e., $\{\hs_2^l\}\cup \opt'$ forms a coverage of $P$ and $\hs_2^l\neq \hs_1$. 

Indeed, since $A_1=\{\hs_1^r\}$ and $\hp_2\in \calR(A_1)$, $\hp_2$ is inside $\hs_1^r$. Since $\hs_1^r$ is to the right of $\hs_1$, and $\hs_2$, which is $\hs_1^l$, is to the left of $\hs_1$, we obtain that $\hs_1^r$ is to the right of $\hs_2$, i.e., $\hs_1^r\in S_r(\hs_2)$. 
Since $\hs_2^l$ contains $\hp_2$, $\hs_2^l\in S_l(\hs_2)$, $\hs_1^r$ contains $\hp_2$, and $\hs_1^r\in S_r(\hs_2)$, by Observation~\ref{obser:contain}, 
we obtain that $\hs_2\subseteq \hs_2^l\cup \hs_1^r$, i.e., $\hs_2\subseteq \hs_2^l\cup \calR(A_1)$. Since $\opt'\cup \{\hs_2\}$ is a feasible solution and $A_1\subseteq \opt'$, it follows that $\{\hs_2^l\}\cup \opt'$ is also a feasible solution. On the other hand, since $\hs_2^l$ is in $S_l(\hs_2)$ while $\hs_2$ (which is $\hs_1^l$) is in $S_l(\hs_1)$, we know that $\hs_2^l$ is in $S_l(\hs_1)$ and thus 
$\hs_2^l\neq \hs_1$. 

As such, if $\hs_2^l\not\in Q$, we can use $\hs_2^l$ as our target $s^*$ and we are done with the process. Otherwise, we let $\hs_3=\hs_2^l$ and continue on the third iteration. In this case, we let $A_2=A_1$. According to our above discussion, we have $A_2\subseteq \opt'$, $\hs_2\subseteq \calR(A_2)\cup \{\hs_3\}$, and $\{\hs_3\}\cup \opt'$ is a feasible solution.  
\end{itemize}

This finishes the second iteration of the process. 

\paragraph{\bf Inductive step.} In general, suppose that we are entering the $i$-th iteration of the process with disk $\hs_i\in Q$, $i\geq 2$. We make the following inductive hypothesis for $i$.

\begin{enumerate}
    \item We have disks $\hs_k\in Q$ for all $k=1,2,\ldots,i-1$ in the previous $i-1$ iterations such that $\hs_i\neq \hs_k$ for any $1\leq k\leq i-1$
    \item We have subsets $A_k$ for all $k=1,2,\ldots,i-1$ such that $A_1\subseteq A_2\subseteq\cdots \subseteq A_{i-1}\subseteq\opt'$, and $\hs_k\subseteq \calR(A_k)\cup \hs_{k+1}$ holds for each $1\leq k\leq i-1$. 
    \item For any $1\leq k\leq i$, $\{\hs_k\}\cup \opt'$ is a feasible solution.  
\end{enumerate}

Our above discussion showed that the hypothesis holds for $i=2$ and $i=3$. 
Next we discuss the $i$-th iteration for any general $i$. Our goal is to find a candidate disk $\hs_{i+1}$ so that $\opt'\cup\{\hs_{i+1}\}$ is a feasible solution and the inductive hypothesis holds for $i+1$. 

Since $\hs_i\in Q$, by Lemma~\ref{lem:10}, $P$ has a point $\hp_i$ outside $\hs_i$ but is covered by a disk $\hs_i^l\in S_l(\hs_i)$ and a disk $\hs_i^r\in S_r(\hs_i)$. 
By Observation~\ref{obser:contain}, $\hs_i\subseteq \hs_i^l\cup \hs_i^r$. 
Depending on whether $\hp_i$ is in $\calR(A_{i-1})$, there are two cases. 

\begin{enumerate}
    \item If $\hp_i\not\in \calR(A_{i-1})$, then since $\hp_i$ is outside $\hs_i$ and $\opt'\cup \{\hs_i\}$ is a feasible solution, $\opt'$ must have a disk $s$ that covers $\hp_i$. Clearly, $s$ is either in $S_l(\hs_i)$ or in $S_r(\hs_i)$. Without loss of generality, we assume that $s\in S_r(\hs_i)$. Since $\hs_i^r$ refers to an arbitrary disk of $S_r(\hs_i)$ that covers $\hp_i$ and $s$ is also a disk of $S_r(\hs_i)$ that covers $\hp_i$, for notational convenience, we let $\hs_i^r$ refer to $s$. As such, $\hs_i^r$ is in $\opt'$. 

    We let $\hs_{i+1}$ be $\hs_{i}^l$ and define $A_{i}=A_{i-1}\cup \{\hs_{i}^r\}$. We argue in the following that the inductive hypothesis holds. 

    \begin{itemize}
        \item 
        Indeed, since $\{\hs_i\}\cup \opt'$ is a feasible solution, $\hs_i\subseteq \hs_i^l\cup \hs_i^r$, $\hs_i^r\in \opt'$, and $\hs_{i+1}=\hs_{i}^l$, we obtain that $\{\hs_{i+1}\}\cup \opt'$ is a feasible solution. This proves the third statement of the hypothesis. 
\item
    Since $A_{i}=A_{i-1}\cup \{\hs_{i}^r\}$, $A_{i-1}\subseteq \opt'$ by the inductive hypothesis and $\hs_i^r\in \opt'$, we obtain $A_i\subseteq \opt'$. Also, since $\hs_i\subseteq \hs_i^l\cup \hs_i^r$, $\hs_i^r\in A_{i}$, and $\hs_{i+1}=\hs_{i}^l$, we have $\hs_i\subseteq \calR(A_i)\cup \hs_{i+1}$.
    This proves the second statement of the hypothesis. 
\item 
    For any disk $\hs_k$ with $1\leq k\leq i-1$, to prove the first statement of the hypothesis, we need to show that $\hs_k\neq \hs_{i+1}$. To this end, since $\hp_i\in \hs_{i+1}$, it suffices to show that $\hp_i\not\in \hs_k$. Indeed, by the inductive hypothesis, $\hs_k\subseteq \calR(A_k)\cup \hs_{k+1}$ and $\hs_{k+1}\subseteq \calR(A_{k+1})\cup \hs_{k+2}$. Hence, $\hs_k\subseteq \calR(A_k)\cup \calR(A_{k+1})\cup \hs_{k+2}$. As $\calR(A_k)\subseteq \calR(A_{k+1})$, we obtain $\hs_k\subseteq \calR(A_{k+1})\cup \hs_{k+2}$. Following the same argument, we can derive $\hs_k\subseteq \calR(A_{i-1})\cup \hs_{i}$. Now that $\hp_i\not\in \calR(A_{i-1})$ and $\hp_i\not\in \hs_i$, we obtain $\hp_i\not\in \hs_k$. 
\end{itemize}

\item 
If $\hp_i\in \calR(A_{i-1})$, then $\hp_i$ is covered by a disk of $A_{i-1}$, say $s$. As $\hp_i\not\in\hs_i$, $s\neq \hs_i$ and thus $s$ is in $S_l(\hs_i)$ or $S_r(\hs_i)$. Without loss of generality, we assume that $s\in S_r(\hs_i)$. 

We let $\hs_{i+1}$ be $\hs_{i}^l$ and define $A_{i}=A_{i-1}$. We show in the following that the inductive hypothesis holds. 
\begin{itemize}
    \item 
    Since $\hp_i$ is in both $s$ and $\hs_i^l$, $\hs_i^l\in S_l(\hs_i)$, and $s\in S_r(\hs_i)$, by Observation~\ref{obser:contain}, $\hs_i\subseteq \hs_i^l\cup s$.     
    Furthermore, since $\{\hs_i\}\cup \opt'$ is a feasible solution, $s\in A_{i-1}\subseteq \opt'$, and $\hs_{i+1}=\hs_{i}^l$, we obtain that $\{\hs_{i+1}\}\cup \opt'$ is also a feasible solution. This proves the third statement of the hypothesis.  

    \item 
    Since $A_{i-1}\subseteq \opt'$ by inductive hypothesis and $A_i=A_{i-1}$, we have $A_i\subseteq \opt'$. As discussed above, $\hs_i\subseteq \hs_i^l\cup s$. Since $s\in A_{i-1}=A_i$ and $\hs_{i+1}=\hs_{i}^l$, we obtain $\hs_i\subseteq \calR(A_i)\cup \hs_{i+1}$. This proves the second statement of the hypothesis.  

    \item 
     For any disk $\hs_k$ with $1\leq k\leq i-1$, to prove the first statement of the hypothesis, we need to show that $\hs_k\neq \hs_{i+1}$. By hypothesis, we know that $\hs_k\neq \hs_i$, implying that $\hs_k\in S_l(\hs_i)$ or $\hs_k\in S_r(\hs_i)$. If $\hs_k\in S_r(\hs_i)$,  since $\hs_{i+1}=\hs_i^l\in S_l(\hs_i)$, it is obviously true that $\hs_k\neq \hs_{i+1}$. In the following, we assume that $\hs_k\in S_l(\hs_i)$ and we will prove that $\hs_k$ does not contain $\hp_i$, which implies that $\hs_k\neq \hs_{i+1}$ as $\hp_i\in \hs_{i+1}$. 

     First of all, since $\hp_i$ is covered by both $s\in S_r(\hs_i)$ and $\hs_{i+1}\in S_l(\hs_i)$, it must hold that $x(\hat{l}_i)<x(\hp_i)<x(\hat{r}_i)$, where $\hat{l}_i$ and $\hat{r}_i$ are the left and right endpoints of the lower segment of $\hs_i$ (i.e., the segment $\hs_i\cap \ell$), respectively (see Fig.~\ref{fig:nocover}). 
     %where $q_1$ and $q_2$ are the left and right endpoints of the segment $\hs_i\cap \ell$, respectively. Indeed, assume to the contrary that $x(\hp_i)\leq x(q_1)$. Then, since $s$ is to the right of $\hs_i$ and the boundaries of $s$ and $\hs_i$ intersect at most once, $\hs_i$ must be contained in $s$, which contradicts with the FIFO property of $S$. Therefore, $x(\hp_i)> x(q_1)$ must hold. Similarly, we can show that $x(\hp_i)< x(q_2)$.
     Hence, since $s\in S_r(\hs_i)$ and $\hp_i\in s$, the upper arcs of $s$ and $\hs_i$ must cross each other, say, at a point $q$. As $\hp_i$ is in $s$ but not in $\hs_i$, we have $x(q)<x(\hp_i)$.

     \begin{figure}[t]
\begin{minipage}[t]{\textwidth}
\begin{center}
\includegraphics[height=1.2in]{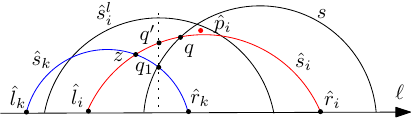}
\caption{\footnotesize Illustrating the proof of Lemma~\ref{lem:coverprune}. Note that $\hs_i^l$ is $\hs_{i+1}$.}
\label{fig:nocover}
\end{center}
\end{minipage}
\vspace{-0.15in}
\end{figure}

     Recall that $\{\hs_i\}\cup \opt'$ is a feasible solution. Also, $\opt'$ cannot be a feasible solution since that would contradict the fact that $\opt$ is an optimal solution as $|\opt|=|\opt'|+1$. This implies that $\hs_i$ is not contained in $\calR(\opt')$. Since all disk centers are below $\ell$, at least one point, say $q'$, on the upper arc of $\hs_i$ is not in $\calR(\opt')$. As $s\in \opt'$, $q'$ is not in $s$. Therefore, $x(q')<x(q)$ must hold. As $x(q)<x(\hp_i)$, we have $x(q')<x(\hp_i)$ (see Fig.~\ref{fig:nocover}). 

    Recall that our goal is to prove that $\hp_i\not\in \hs_k$. Let $\hat{l}_k$ and $\hat{r}_k$ be the left and right endpoints of the lower segment of $\hs_k$, respectively. If $x(\hat{r}_k)\leq x(q')$, then since $x(q')<x(\hp_i)$, it is obviously true that $\hs_k$ does not contain $\hp_i$. We therefore assume that $x(\hat{r}_k)> x(q')$  (see Fig.~\ref{fig:nocover}). Since $\hs_k\in S_l(\hs_i)$, we have $x(\hat{l}_k)<x(\hat{l}_i)$. As $x(\hat{l}_i)< x(q')$, we obtain $x(\hat{l}_k)<x(q')$. Since $x(\hat{l}_k)<x(q')<x(\hat{r}_k)$, the vertical line through $q'$ must intersect the upper arc of $\hs_k$ at a point, say, $q_1$ (see Fig.~\ref{fig:nocover}). 

    We claim $y(q_1)\leq y(q')$. Indeed, since $q'$ is not inside $\calR(\opt')$, $q'$ is on the upper envelope of all disks of $\opt'\cup \{\hs_i\}$, denoted by $\calU$. Recall that we have proved above (in the first case where $\hp_i\in \calR(A_{i-1})$) that $s_k\subseteq \calR(\opt')\cup \hs_i$. Therefore, the upper arc of $\hs_k$ cannot be higher than $\calU$. As $q'\in \calU$, it follows that $y(q_1)\leq y(q')$. 

    Since $x(\hat{l}_k)<x(\hat{l}_i)<x(q')<x(\hat{r}_k)$, due to the non-containment property, the upper arcs of $\hs_k$ and $\hs_i$ must cross each other at a single point, say $z$. Because $y(q_1)\leq y(q')$, we have $x(z)\leq x(q_1)$. As such, the region of $\hs_k$ to the right of $q_1$ must be inside $\hs_i$ (see Fig.~\ref{fig:nocover}). Recall that $x(q_1)=x(q')<x(\hp_i)$. As $\hp_i\not\in \hs_i$, $\hp_i$ cannot be in $\hs_k$. 
\end{itemize}
\end{enumerate}

This proves that the inductive hypothesis still holds for $i+1$. 

\bigskip

The inductive hypothesis implies that each iteration of the process always finds a new candidate disk $\hs_i$ such that $\opt'\cup \{\hs_i\}$ is a feasible solution. If $\hs_i\not\in Q$, then we can use $\hs_i$ as our target disk $s^*$ and we are done with the process. Otherwise, we continue with the next iteration. Since each iteration finds a new candidate disk (that was never used before) and $|Q|$ is finite, eventually we will find a candidate disk $\hs_i$ not in $Q$. 

This completes the proof of the lemma.  \qed
\end{proof}

It remains to prove the correctness of the third main step of our algorithm. 
For each disk $s_i\in S^*$, $a(i)< b(i)$ by definition; define $P(s_i)=\{p_j \ |\ a(i)<j<b(i)\}$.
%The following lemma follows from the definition of $a(i)$

\begin{lemma}\label{lem:30}
All points of $P(s_i)$ are inside $s_i$.    
\end{lemma}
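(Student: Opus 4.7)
}
The plan is to argue by contradiction, using nothing more than the definitions of $a(i)$ and $b(i)$ together with the standing assumption (stated in Section~\ref{sec:pre}) that every point of $P$ is covered by at least one disk of $S$. The proof should be short and essentially a case analysis on which disk covers an allegedly uncovered point.

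Suppose, for contradiction, that some $p_j \in P(s_i)$ lies outside $s_i$, where $a(i) < j < b(i)$. Since $p_j \in P$, it is covered by at least one disk $s_k \in S$ by assumption. I would split into three cases based on how $k$ compares to $i$. Case~1: $k = i$ is immediately ruled out because $p_j$ is outside $s_i$. Case~2: $k < i$. Then $p_j$ is covered by a disk in $S[1,i-1]$ and not by $s_i$, so by the definition of $a(i)$ we get $a(i) \ge j$, contradicting $j > a(i)$. Case~3: $k > i$. Then $p_j$ is covered by a disk in $S[i+1,m]$ and not by $s_i$, so by the definition of $b(i)$ we get $b(i) \le j$, contradicting $j < b(i)$. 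All three cases are impossible, so $p_j$ must be inside $s_i$, which gives the lemma.

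There is no real obstacle here: the statement is essentially a tautological unpacking of the definitions of $a(i)$ and $b(i)$, combined with the input assumption that $P$ is fully covered by $S$. The only thing to be careful about is to use the assumption that every point of $P$ is covered by some disk of $S$ (not just of $S^*$), which is precisely the hypothesis mentioned in Section~\ref{sec:pre}. No use of the non-containment property or of the fact that $s_i$ is non-prunable is needed; the non-prunability only enters through ensuring $a(i) < b(i)$ so that $P(s_i)$ is well defined.
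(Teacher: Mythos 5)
Your proof is correct and follows essentially the same route as the paper's: assume a point of $P(s_i)$ lies outside $s_i$, take a covering disk from $S$, and derive a contradiction with $a(i)<j<b(i)$ depending on whether that disk lies in $S_l(s_i)$ or $S_r(s_i)$. Your closing remark that non-prunability enters only through guaranteeing $a(i)<b(i)$ is also accurate.
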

\begin{proof}
 Assume to the contrary that a point $p_k\in P(s_i)$ is not in $s_i$. By definition, $a(i)< k< b(i)$. Recall that each point of $P$ is covered by a disk of $S$. Let $s$ be a disk of $S$ that covers $p_k$. Since $s\neq s_i$, $s$ is either in $S_l(s_i)$ or in $S_r(s_i)$. In the former case, by the definition of $a(i)$, $a(i)\geq k$, but this contradicts $a(i)<k$. In the latter case, by the definition of $b(i)$, $b(i)\leq k$; but this contradicts $k<b(i)$. \qed
\end{proof}

The following lemma justifies the correctness of the third main step of our algorithm.  

\begin{lemma}\label{lem:40}
Suppose $\opt$ is an optimal solution to the coverage problem on $S^*$ and $P$, and $s_i$ is a disk in $\opt$. Then, any point of $P\setminus P(s_i)$ must be covered by a disk of $\opt\setminus \{s_i\}$. 
\end{lemma}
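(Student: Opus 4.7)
}

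The plan is to fix a point $p_j \in P \setminus P(s_i)$, which by the definition of $P(s_i)$ means $j \le a(i)$ or $j \ge b(i)$. By symmetry (the roles of $S_l(s_i)$ and $S_r(s_i)$ are interchangeable), I will only treat the case $j \le a(i)$; the case $j \ge b(i)$ will be proved by a symmetric argument using $b(i)$ and a disk in $S_r(s_i) \cap \opt$. My goal is to exhibit a disk of $\opt \setminus \{s_i\}$ that covers $p_j$.

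First I dispose of the easy subcases. If $p_j$ is not contained in $s_i$, then since $\opt$ covers all of $P$, some disk of $\opt$ other than $s_i$ must cover $p_j$, and we are done. In particular, this handles $j = a(i)$, since $p_{a(i)} \notin s_i$ by the definition of $a(i)$. So assume $j < a(i)$ and $p_j \in s_i$. Now look at $p_{a(i)}$: it is outside $s_i$, so it must be covered by some disk $s_k \in \opt \setminus \{s_i\}$. Because $s_i \in S^*$, we have $a(i) < b(i)$, and this forces $s_k$ to lie in $S_l(s_i)$: if instead $s_k \in S_r(s_i)$, the definition of $b(i)$ would give $b(i) \le a(i)$, a contradiction. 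Thus $k < i$, so by the non-containment property, $x(l_k) < x(l_i)$ and $x(r_k) < x(r_i)$.

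The main step is the geometric argument showing that $s_k$ covers $p_j$, and this is where I expect the technical difficulty. Since $p_j \in s_i$, we have $x(l_i) \le x(p_j) \le x(r_i)$; combined with $x(p_j) < x(p_{a(i)})$ (points of $P$ are sorted by $x$-coordinate), this forces $x(p_{a(i)}) > x(l_i)$, so $p_{a(i)}$ is vertically above $s_i$ (i.e., $x(l_i) < x(p_{a(i)}) < x(r_i)$). Since $p_{a(i)} \in s_k$, we also get $x(p_{a(i)}) \le x(r_k)$. The upper arcs of $s_k$ and $s_i$ therefore share the $x$-interval $[x(l_i), x(r_k)]$, and at the two endpoints of this interval the roles of ``higher arc'' are reversed (at $x(l_i)$ the arc of $s_k$ is strictly above $\ell$ while that of $s_i$ touches $\ell$, and at $x(r_k)$ the reverse happens). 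By the single-intersection condition, the two upper arcs must cross at exactly one point $q$ in this common strip, and to the left of $q$ the upper arc of $s_k$ lies above that of $s_i$, while to the right of $q$ it lies below. Since $p_{a(i)} \in s_k \setminus s_i$ lies in the vertical slab above $s_i$, it must be sandwiched between the two arcs on the side where the arc of $s_k$ is higher, giving $x(p_{a(i)}) < x(q)$.

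Finally, for our $p_j$ with $j < a(i)$ and $p_j \in s_i$, I get $x(l_k) < x(l_i) \le x(p_j) < x(p_{a(i)}) < x(q) < x(r_k)$. Hence $p_j$ lies in the common $x$-slab, strictly to the left of $q$, and below the upper arc of $s_i$ (since $p_j \in s_i$); because the upper arc of $s_k$ is above that of $s_i$ on this side of $q$, the point $p_j$ is also below the upper arc of $s_k$ and above $\ell$, so $p_j \in s_k$. Since $s_k \in \opt \setminus \{s_i\}$, we are done. The hard part is really the arc-crossing analysis in the previous paragraph; once the single crossing point $q$ is isolated and $x(p_{a(i)}) < x(q)$ is established, the conclusion follows by a clean monotonicity argument on the two upper arcs.
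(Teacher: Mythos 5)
Your proof is correct and follows essentially the same route as the paper's: pick a disk $s_k\in\opt$ covering $p_{a(i)}$, argue that $s_k\in S_l(s_i)$ because otherwise $b(i)\le a(i)$ would contradict $a(i)<b(i)$, and then use the unique crossing point $q$ of the two upper arcs together with $x(p_{a(i)})< x(q)$ to conclude that the portion of $s_i$ left of $p_{a(i)}$ lies inside $s_k$. Your write-up merely spells out the arc-crossing and sign-reversal argument in more detail than the paper does.
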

\begin{proof}
Let $p$ be a point in $P\setminus P(s_i)$. If $p\not\in s_i$, then since the disks of $\opt$ form a coverage of $P$, there must be a disk of $\opt\setminus\{s_i\}$ that covers $p$. In the following, we assume that $p\in s_i$. Since $p\not\in P(s_i)$, by Lemma~\ref{lem:30}, either $x(p)\leq x(p_{a(i)})$ or $x(p)\geq x(p_{b(i)})$. In the following we only discuss the former case as the latter case is symmetric. 

%We assume that $x(p)\leq x(p_{a(i)})$. 
Since $\opt$ is an optimal solution, $\opt$ must have a disk $s$ that covers $p_{a(i)}$ (see Fig.~\ref{fig:cover}). By definition, $p_{a(i)}$ is not in $s_i$. Hence, $s\neq s_i$ and thus $s$ is either in $S_l(s_i)$ or in $S_r(s_i)$. We claim that $s$ must be in $S_l(s_i)$. Indeed, assume to the contrary that $s\in S_r(s_i)$. Then, by the definition of $b(i)$, $b(i)\leq a(i)$ must hold, which contradicts $a(i)<b(i)$. 
Since $s\in S_l(s_i)$, we next prove that $s$ must cover $p$, which will prove the lemma. 

\begin{figure}[t]
\begin{minipage}[t]{\textwidth}
\begin{center}
\includegraphics[height=0.8in]{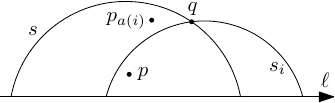}
\caption{\footnotesize Illustrating the proof of Lemma~\ref{lem:40}.}
\label{fig:cover}
\end{center}
\end{minipage}
\vspace{-0.15in}
\end{figure}

Indeed, since $x(p)\leq x(p_{a(i)})$, $p$ is inside $s_i$, and $s\in S_l(s_i)$,  due to the non-containment property of $S$, the upper arcs of $s_i$ and $s$ must intersect at a single point, say, $q$ (see Fig.~\ref{fig:cover}). Furthermore, since $p_{a(i)}$ is in $s$ but not in $s_i$, $x(p_{a(i)})\leq x(q)$ must hold. This implies that the region of $s_i$ left of $p_{a(i)}$ is inside $s$. As $p$ is in $s_i$ and $x(p)\leq x(p_{a(i)})$, $p$ must be inside $s$.
%This proves the lemma. 
\qed
\end{proof}

In light of the preceding two lemmas, when considering the coverage of $s_i$, it suffices to consider only the points in $P(i)$. This establishes the correctness of the third main step of our algorithm. 

\subsection{Algorithm implementation}
\label{sec:coverimplement}
In this section, we describe the implementation of our algorithm. Recall that points of $P$ are indexed in ascending order of their $x$-coordinates as $p_1,p_2,\ldots,p_n$, and disks of $S$ are indexed in ascending order of their leftmost points as $s_1,s_2,\ldots,s_m$  (which is also the order of their rightmost points due to the non-containment property). 
Also recall that all points of $P$ are above the $x$-axis $\ell$ while the centers of all disks of $S$ are below $\ell$. 

To implement the algorithm, based on our algorithm description, it remains to perform the following two tasks: (1) Compute the subset $S^*\subseteq S$ of disks that are not prunable; (2) for each disk $s_i\in S^*$, compute $a(i)$ and $b(i)$. 

To achieve the above (1), we resort to Lemma~\ref{lem:10}. To this end, for each point $p\in P$, we define $\sigma_1(p) $ as the smallest index of the disk in $ S $ that covers $ p $, and $\sigma_2(p) $ as the largest index of the disk in $ S $ that covers $ p $. Note that both $\sigma_1(p)$ and $\sigma_2(p)$ are well defined since every point in $P$ is covered by at least one disk.
We first compute $\sigma_1(p)$ and $\sigma_2(p)$ in the following lemma. 

\begin{lemma}\label{lem:50}
Computing $\sigma_1(p)$ and $\sigma_2(p)$ for all points $p\in P$ can be done in $O((n+m)\log m)$ time. 
\end{lemma}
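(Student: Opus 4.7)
The plan is to turn the computation of $\sigma_1(p)$ and $\sigma_2(p)$ into point-location queries on upper envelopes of contiguous index ranges of disks, organized in a segment tree. Two structural facts drive the approach. By the non-containment property (Observation~\ref{obser:FIFO}), the disks $s_i$ whose $x$-range $[x(l_i),x(r_i)]$ contains $x(p)$ form a contiguous index range $[L_p,R_p]$; its endpoints are obtained by binary search in $O(\log m)$ per point on the sorted lists of left/right endpoint $x$-coordinates, and only these disks can possibly cover $p$. By the single-intersection condition, the upper envelope of any $k$ arcs is a Davenport--Schinzel sequence of order $2$ of complexity $O(k)$, constructible in $O(k\log k)$ time by divide-and-conquer merging of sub-envelopes.

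I first build a balanced segment tree $T$ over $s_1,\ldots,s_m$ in index order, storing at each node $v$ representing an interval $[a_v,b_v]$ the upper envelope $\calU_v$ of the arcs of $\{s_{a_v},\ldots,s_{b_v}\}$. The total size and construction time sum to $O(m\log m)$. For each point $p$, $\sigma_1(p)$ is the smallest $i\in[L_p,R_p]$ whose arc at $x(p)$ has height at least $y(p)$; I compute it by decomposing $[L_p,R_p]$ into the $O(\log m)$ canonical subtrees of $T$, visiting them from left to right, and at each visited node $v$ point-locating $x(p)$ on $\calU_v$ to test whether the envelope reaches $y(p)$. The first canonical subtree whose envelope passes the test is then descended using a left-biased recursion (try the left child first, fall back to the right) to locate the smallest qualifying index; if none of the canonical envelopes reach $y(p)$ then no disk covers $p$ (ruled out by our standing assumption). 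Computing $\sigma_2(p)$ is the symmetric right-biased procedure.

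A straightforward analysis yields $O(\log^2 m)$ per query (at most $O(\log m)$ visited nodes, each with an $O(\log m)$ point location in its envelope), giving $O((n+m)\log^2 m)$ overall. To match the claimed $O((n+m)\log m)$ bound I apply fractional cascading (Chazelle--Guibas) to the $x$-sorted envelope sequences $\calU_v$ along root-to-leaf paths of $T$, which amortizes the repeated point locations of one query to $O(\log m)$ in total. The main technical obstacle I anticipate is setting up fractional cascading correctly on upper envelopes rather than on plain sorted lists: one must verify that the catalogs induced by envelope breakpoints are linked compatibly between parent and child nodes (via bridges on shared $x$-values) so that successor pointers reliably guide the point-location queries across levels. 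Once this is in place, the $O((n+m)\log m)$ bound follows immediately.
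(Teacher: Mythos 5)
Your proposal is correct and follows essentially the same route as the paper: a balanced binary tree over the disks in index order storing at each node the upper envelope of its disks' arcs (linear size thanks to the single-intersection condition), a biased descent using point location on these envelopes to find the extreme covering index, and fractional cascading on the envelope breakpoints to reduce the per-point cost from $O(\log^2 m)$ to $O(\log m)$. The only differences are minor: the paper simply descends a single root-to-leaf path from the root rather than first restricting to the canonical decomposition of $[L_p,R_p]$, and it justifies the $O(k)$ envelope bound by extending each arc with the line $\ell$ so as to obtain totally defined curves that still pairwise cross at most once --- for partially defined arcs alone, the order-$2$ Davenport--Schinzel claim you assert is not quite right and would only give $O(k\alpha(k))$.
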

\begin{proof}
We only discuss how to compute $\sigma_1(p)$ since the algorithm for $\sigma_2(p)$ is similar. 

Let $T$ be a complete binary search tree whose leaves from left to right correspond to disks in their index order. Since $m=|S|$, the height of $T$ is $O(\log m)$. For each node $v\in T$, let $S_v$ denote the subset of disks of $S$ in the leaves of the subtree rooted at $v$. We use $\xi_v$ to denote the upper envelope of the $x$-axis $\ell$ and the upper arcs of all disks of $S_v$. Since all points of $P$ are above $\ell$, our algorithm is based on the observation that a point $p\in P$ is inside a disk of $S_v$ if and only if $p$ is below $\xi_v$. We construct $\xi_v$ for every node $v\in T$. Due to the single-intersection condition that the upper arcs of every two disks of $S$ intersect at most once, $\xi_v$ has at most $O(|S_v|)$ vertices. To see this, we can view the upper envelope of each upper arc of $S_v$ and $\ell$ as an extended arc. Every two such extended arcs still cross each other at most once and therefore their upper envelope has $O(|S_v|)$ vertices following the standard Davenport-Schinzel sequence argument~\cite{ref:SharirDa96} (see also \cite[Lemma 3]{ref:ChanAl16} for a similar problem). Notice that $\xi_v$ is exactly the upper envelope of these extended arcs and thus $\xi_v$ has $O(|S_v|)$ vertices. Note also that $\xi_v$ is $x$-monotone. In addition, given $\xi_u$ and $\xi_w$, where $u$ and $w$ are the two children of $v$, $\xi_v$ can be computed in $O(|S_v|)$ time by a straightforward line sweeping algorithm. As such, computing $\xi_v$ for all nodes $v\in T$ can be accomplished in time linear in $\sum_{v\in T}|S_v|$, which is $O(m\log m)$. 

Consider a point $p\in P$. We compute $\sigma_1(p)$ using $T$, as follows. Starting from the root of $T$, for each node $v$, we do the following. Let $u$ and $w$ be the left and right children of $v$, respectively. We first determine whether $p$ is below $\xi_u$; since $|S_v|\leq m$, this can be done in $O(\log m)$ time by binary search on the sorted list of the vertices of $\xi_u$ by their $x$-coordinates. If $p$ is below $\xi_u$, then $p$ must be inside a disk of $S_u$; in this case, we proceed with $v=u$. Otherwise, we proceed with $v=w$. In this way, $\sigma_1(p)$ can be computed after searching a root-to-leaf path of $T$, which has $O(\log m)$ nodes as the height of $T$ is $O(\log m)$. Because we spend $O(\log m)$ time on each node, the total time for computing $\sigma_1(p)$ is $O(\log^2 m)$. The time can be improved to $O(\log m)$ using fractional cascading~\cite{ref:ChazelleFr86}, as follows. 

The $x$-coordinates of all vertices of the upper envelope $\xi_v$ of each node $v\in T$ partition the $x$-axis into a set $\calI_v$ of intervals. To determine whether $p$ is below $\xi_v$, it suffices to find the interval of $\calI_v$ that contains $x(p)$, the $x$-coordinate of $p$ (after the interval is known, whether $p$ is below $\xi_v$ can be determined in $O(1)$ time). We construct a fractional cascading data structure on the intervals of $\calI_v$ of all nodes $v\in T$, which takes $O(m\log m)$ time~\cite{ref:ChazelleFr86} since the total number of such intervals is $O(m\log m)$. 

With the fractional cascading data structure, we only need to do binary search on the set of the intervals stored at the root to determine the interval containing $x(p)$, which takes $O(\log m)$ time. After that, for each node $u$ during the algorithm described above, the interval of $\calI_u$ containing $x(p)$ can be determined in $O(1)$ time~\cite{ref:ChazelleFr86}. As such, computing $\sigma_1(p)$ takes $O(\log m)$ time. 

In summary, computing $\sigma_1(p)$ for all points $p\in P$ takes $O((n+m)\log m)$ time in total. 
%The lemma thus follows. 
\qed
\end{proof}

We now describe our algorithm to compute $S^*$, or alternatively, find all prunable disks of $S$. By Lemma~\ref{lem:10}, we have the following observation. 

\begin{observation}\label{obser:30}
A disk $s_i\in S$ is prunable if and only there exists a point $p\in P$ such that $p\not\in s_i$ and $\sigma_1(p)\leq i\leq \sigma_2(p)$.    
\end{observation}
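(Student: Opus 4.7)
The plan is to derive Observation~\ref{obser:30} directly from Lemma~\ref{lem:10} by rephrasing its geometric condition in the language of the indices $\sigma_1(\cdot)$ and $\sigma_2(\cdot)$. Recall that $S_l(s_i)=S[1,i-1]$ and $S_r(s_i)=S[i+1,m]$, so ``$p$ is covered by some disk of $S_l(s_i)$'' is equivalent to saying that the smallest index of a disk covering $p$ is at most $i-1$, i.e.\ $\sigma_1(p)\le i-1$; symmetrically, ``$p$ is covered by some disk of $S_r(s_i)$'' is equivalent to $\sigma_2(p)\ge i+1$. Thus Lemma~\ref{lem:10} already says that $s_i$ is prunable iff there exists $p\in P$ with $p\notin s_i$, $\sigma_1(p)\le i-1$, and $\sigma_2(p)\ge i+1$.

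To finish the proof I would observe that the condition $p\notin s_i$ forces $\sigma_1(p)\ne i$ and $\sigma_2(p)\ne i$ (the disk $s_i$ itself cannot be the smallest- or largest-indexed disk covering $p$, since $s_i$ does not cover $p$ at all). Consequently the strict inequalities $\sigma_1(p)<i$ and $\sigma_2(p)>i$ hold if and only if the non-strict inequalities $\sigma_1(p)\le i$ and $\sigma_2(p)\ge i$ hold, i.e.\ $\sigma_1(p)\le i\le \sigma_2(p)$. Putting these two implications together gives the stated equivalence.

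So the proof is essentially a two-line bookkeeping argument: invoke Lemma~\ref{lem:10}, rewrite ``covered by a disk in $S_l(s_i)$'' and ``covered by a disk in $S_r(s_i)$'' as inequalities on $\sigma_1(p)$ and $\sigma_2(p)$, and then use $p\notin s_i$ to pass between strict and non-strict inequalities. There is no real obstacle here; the only care needed is to note that without the hypothesis $p\notin s_i$ the equivalence would fail, because $\sigma_1(p)\le i\le\sigma_2(p)$ could hold simply by taking $s_i$ itself as the witnessing disk, and that case is explicitly excluded by Lemma~\ref{lem:10}.
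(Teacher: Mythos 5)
Your proposal is correct and is essentially the same argument as the paper's: both directions reduce to Lemma~\ref{lem:10} by translating ``covered by a disk of $S_l(s_i)$'' and ``covered by a disk of $S_r(s_i)$'' into $\sigma_1(p)<i$ and $\sigma_2(p)>i$, and both use the fact that $p\notin s_i$ rules out $\sigma_1(p)=i$ and $\sigma_2(p)=i$ to pass between the strict and non-strict forms.
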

\begin{proof}
Suppose that $s_i$ is prunable. Then, by Lemma~\ref{lem:10}, there exists a point $p\in P$ such that $p\not\in s_i$ and $p$ is covered by both a disk in $S_l(s_i)$ and a disk in $S_r(s_i)$. By definition, we have $\sigma_1(p)<i<\sigma_2(p)$. 

On the other hand, suppose that there exists a point $p\in P$ such that $p\not\in s_i$ and $\sigma_1(p)\leq i\leq \sigma_2(p)$. Then, since $p\not\in s_i$ and $\sigma_1(p)\leq i$, we obtain $\sigma_1(p)\neq i$ and thus $\sigma_1(p)<i$. As such, $p$ is covered by a disk in $S_l(s_i)$. By a similar argument, $p$ is also covered by a disk in $S_r(s_i)$. By Lemma~\ref{lem:10}, $s_i$ is prunable. \qed
\end{proof}

With Observation~\ref{obser:30} at hand, the following lemma computes $S^*$. 

\begin{lemma}\label{lem:60}
All prunable disks of $S$ can be found in $O(n\log n+n\log m+m\log m\log n)$ time. 
\end{lemma}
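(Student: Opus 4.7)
Plan: By Observation~\ref{obser:30}, $s_i$ is prunable iff there is some $p\in P$ with $\sigma_1(p)<i<\sigma_2(p)$ and $p\notin s_i$. The plan is to test this condition for every $i$ via a canonical decomposition over the disk-index interval, backed by convex hulls of the associated point subsets.

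First, I apply Lemma~\ref{lem:50} to compute $\sigma_1(p)$ and $\sigma_2(p)$ for every $p\in P$ in $O((n+m)\log m)$ time, and sort $P$ by $x$-coordinate in $O(n\log n)$ time. Next, I build a segment tree $\mathcal{T}$ on $[1,m]$. For each $p\in P$ processed in the $x$-sorted order, I insert $p$ into the $O(\log m)$ canonical nodes of $\mathcal{T}$ whose subintervals together cover $p$'s ``active interval'' $[\sigma_1(p)+1,\sigma_2(p)-1]$. Letting $P_v$ denote the set stored at node $v$, processing in $x$-order ensures that each $P_v$ ends up $x$-sorted; the insertion phase costs $O(n\log m)$ and $\sum_v |P_v|=O(n\log m)$.

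At each node $v$ I then build the convex hull $\mathrm{CH}(P_v)$ by Andrew's monotone chain algorithm on the already-sorted $P_v$, costing $O(|P_v|)$ per node and $O(n\log m)$ overall. For each disk $s_i$, I walk from the root to the leaf $i$ in $\mathcal{T}$ and, at each of the $O(\log m)$ ancestor nodes $v$, perform a ``witness query'': is any point of $P_v$ outside $s_i$? Writing $c_i$ and $r_i$ for the center and radius of $s_i$, this is exactly the question whether the maximum of $|p-c_i|$ over $p\in P_v$ exceeds $r_i$. Since the farthest point from a fixed $c_i$ must lie at a vertex of $\mathrm{CH}(P_v)$ and the squared-distance function is unimodal around the hull, the query is answered in $O(\log|P_v|)=O(\log n)$ time by binary search on $\mathrm{CH}(P_v)$. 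If any ancestor's query returns a witness, I mark $s_i$ prunable; otherwise $s_i$ is not prunable.

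Summing: Lemma~\ref{lem:50} contributes $O((n+m)\log m)$; sorting costs $O(n\log n)$; segment-tree insertion and hull construction each cost $O(n\log m)$; and the $m$ disk queries total $O(m\log m\log n)$. This yields $O(n\log n+n\log m+m\log m\log n)$, matching the claim. The main obstacle is the per-node data structure choice: the convex hull plus farthest-point-on-hull test is tailored both to allow $O(|P_v|)$ construction from the $x$-sorted input (so that the total preprocessing fits the $O(n\log m)$ budget rather than blowing up to $O(n\log n\log m)$) and to answer each witness query in $O(\log n)$ time via the unimodal distance function on the hull.
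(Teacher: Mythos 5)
Your decomposition is exactly the paper's: the same segment tree on disk indices with points stored at canonical nodes according to $[\sigma_1(p),\sigma_2(p)]$, the same $x$-sorted insertion so that each $P_v$ comes out sorted and its convex hull costs only $O(|P_v|)$, and the same $O(m\log m)$ root-to-leaf queries. The one place you diverge is the per-node query structure, and that is where the proof breaks. You claim that the farthest vertex of $\mathrm{CH}(P_v)$ from the center $c_i$ can be found in $O(\log n)$ time by binary search because ``the squared-distance function is unimodal around the hull.'' This is false: the distance from a fixed point to the vertices of a convex polygon, taken in cyclic order, can have $\Omega(n)$ local maxima. For a concrete counterexample, place $n$ points on a circular arc centered at $c_i$ and push every other one inward by a tiny $\varepsilon$; the points remain in convex position, yet the distance from $c_i$ alternates between two values, so every other vertex is a local maximum and no binary/ternary search on the hull can locate the global maximum. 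Restricting $c_i$ to lie below $\ell$ and $P_v$ above $\ell$ does not rescue unimodality, since the same construction fits entirely above the line with the center below it. So as written, a witness query can return ``no point outside $s_i$'' even when one exists, and the algorithm may fail to mark a prunable disk.

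The repair is the step the paper takes after building the hulls: from the $x$-sorted hull $H_v$, compute the \emph{farthest-point Voronoi diagram} of $P_v$ in $O(|H_v|)$ additional time (Aggarwal, Guibas, Saxe, and Shor), equip it with an $O(|P_v|)$-time point-location structure, and answer each witness query by locating $c_i$ in that diagram in $O(\log n)$ time; the cell containing $c_i$ names the farthest point of $P_v$ from $c_i$, which you then test against $s_i$. This preserves all of your stated bounds --- $O(n\log n+n\log m)$ preprocessing and $O(m\log m\log n)$ for the queries --- and your instinct about why the hull must be built from pre-sorted lists (to avoid an $O(n\log n\log m)$ blowup) is exactly right; you only need to put the correct query structure on top of it.
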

\begin{proof}
Let $T$ be the standard segment tree~\cite[Section 10.3]{ref:deBergCo08} on the indices $1,2,\ldots,m$ by considering each index $i$ a point on the $x$-axis $\ell$ whose $x$-coordinate is $i$. The height of $T$ is $O(\log m)$. For each point $p\in P$, let $I_p$ denote the interval $[\sigma_1(p),\sigma_2(p)]$ of $\ell$. Following the definition of the standard segment tree~\cite[Section 10.3]{ref:deBergCo08}, we store $I_p$ in $O(\log m)$ nodes of $T$. For each node $v\in T$, let $P_v$ denote the subset of points $p$ of $P$ whose interval $I_p$ is stored at $v$. As such, $\sum_{v\in T}|P_v|=O(n\log m)$. 

%For each node $v\in T$, define $P_v^l$ as the subset of points $p\in P_v$ such that $\sigma_1(p)$ is not equal to the leftmost leaf of $v$ and $P_v^r$ as the subset of points $p\in P_v$ such that $\sigma_2(p)$ is not equal to the leftmost leaf of $v$. 

Consider a disk $s_i\in S$. $T$ has a leaf corresponding to the index $i$, called {\em leaf-$i$}. Let $\pi_i$ denote the path of $T$ from the root to leaf-$i$. Following the definition of the segment tree, we have the following observation: $\bigcup_{v\in \pi_i}P_v=\{p\ |\ p\in P, \sigma_1(p)\leq i\leq \sigma_2(p)\}$. By Observation~\ref{obser:30}, to determine whether $s_i$ is prunable, it suffices to determine whether there is a node $v\in \pi_i$ such that $P_v$ has a point $p\not\in s_i$. Based on this observation, we perform the following processing on $T$. 
%Observe that for any point $p\in P$ with $\sigma_1(p)<i<\sigma_2(p)$, $p$ must be in $P_v$ for exactly one node $v$ of $\pi_i$. On the other hand, for each node $v\in \pi_i$, if leaf-$i$ is neither the leftmost leaf or the rightmost leaf of $v$, then for any point $p\in P_v$, $\sigma_1(p)<i<\sigma_2(p)$ always holds; hence, if $P_v$ has a point $p\not\in s_i$, then $s_i$ is prunable. If leaf-$i$ is the leftmost leaf of $v$, then $P_v^l$ comprises exactly those points $p\in P_v$ with $\sigma_1(p)<i<\sigma_2(p)$; hence, if $P^l_v$ has a point $p\not\in s_i$, then $s_i$ is prunable. Similarly, if leaf-$i$ is the rightmost leaf of $v$, then $P_v^r$ comprises exactly those points $p\in P_v$ with $\sigma_1(p)<i<\sigma_2(p)$; if $P^r_v$ has a point $p\not\in s_i$, then $s_i$ is prunable.
For each node $v\in T$, we construct a farthest Voronoi diagram for $P_v$, denoted by $\fd_v$, and  
%Similarly, we compute farthest Voronoi diagrams for $P^l_v$ and $P^r_v$, denoted by $\fd^l_v$ and $\fd^r_v$, respectively. 
then build a point location data structure on $\fd_v$ so that each point location query can be answered in $O(\log n)$ time~\cite{ref:EdelsbrunnerOp86,ref:KirkpatrickOp83}. Computing $\fd_v$ can be done in $O(|P_v|\log |P_v|)$ time~\cite{ref:ShamosCl75} and constructing the point location data structure takes $O(|P_v|)$ time~\cite{ref:EdelsbrunnerOp86,ref:KirkpatrickOp83}. Since $\sum_{v\in T}|P_v|=O(n\log m)$, the total time for constructing the farthest Voronoi diagrams for all nodes of $T$ is $O(n\log m\log n)$ and the total time for building all point location data structures is $O(n\log m)$. 

For each disk $s_i\in S$, for each node $v\in \pi_i$, 
%if leaf-$i$ is neither the leftmost leaf or the rightmost leaf of $v$, then 
using a point location query on $\fd_v$, we find the farthest point $p$ of $P_v$ from the center of $s_i$ in $O(\log n)$ time. If $p\not\in s_i$, then we assert that $s_i$ is prunable and halt the algorithm for $s_i$; otherwise, all points of $P_v$ are inside $s_i$ (and thus no point of $P_v$ can cause $s_i$ to be prunable) and we continue on other nodes of $\pi_i$. 
%If leaf-$i$ is the leftmost leaf of $v$, then we using a point location query, we find the farthest point $p$ of $P^l_v$ from the center of $s_i$. If $p\not\in s_i$, then we assert that $s_i$ is prunable and halt the algorithm for $s_i$; otherwise, we know no point of $P_v$ can cause $s_i$ prunable and thus we continute on considering other nodes of $\pi_i$. If leaf-$i$ is the leftmost leaf of $v$, the algorithm is similar to the above. 
Since $\pi_i$ has $O(\log m)$ nodes and each point location query takes $O(\log n)$ time, it takes $O(\log m\log n)$ time to determine whether $s_i$ is prunable. Therefore, the total time for doing this for all disks $s_i\in S$ is $O(m\log m\log n)$. 

As such, we can find all prunable disks in a total of $O(m\log m\log n+ n\log m\log n+n\log m)$ time. Note that the factor $O(n\log n\log m)$ is due to the construction of all the farthest Voronoi diagrams $\fd_v$ for all nodes $v\in T$. We can further reduce the time to $O(n\log n+n\log m)$, as follows. 

First, $\fd_v$ is determined only by the points of $P_v$ that are vertices of the convex hull $H_v$ of $P_v$~\cite{ref:ShamosCl75}. Furthermore, once $H_v$ is available, $\fd_v$ can be computed in $O(|H_v|)$ time~\cite{ref:AggarwalA89}. On the other hand, $H_v$ can be obtained in $O(|P_v|)$ time if the points of $P_v$ are sorted by their $x$-coordinates. To have sorted lists for $P_v$ for all nodes $v\in T$, we do the following. At the start of the algorithm, we sort all points of $P$ by their $x$-coordinates in $O(n\log n)$ time. Then, for each point $p$ of $P$ following this sorted order, we find the nodes $v$ of $T$ where the interval $I_p$ should be stored and add $p$ to $P_v$, which takes $O(\log m)$ time~\cite[Section 7.4]{ref:deBergCo08}. In this way, after all points of $P$ are processed as above, $P_v$ for every node $v\in T$ is automatically sorted. As such, the total time for constructing all farthest Voronoi diagrams $\fd_v$ of all nodes $v\in T$ is $O(n\log n+n\log m)$. Therefore, the total time of the overall algorithm is $O(n\log n+n\log m+m\log m\log n)$. \qed
\end{proof}

The remaining task is to compute $a(i)$ and $b(i)$ for all disks $s_i\in S^*$. In what follows, we only discuss how to compute $a(i)$ since the algorithm for $b(i)$ is analogous. Our algorithm relies on the following observation, whose proof is based on the fact that none of the disks of $S^*$ is prunable. 

\begin{observation}\label{obser:40}
For each disk $s_i\in S^*$, $a(i)$ is the largest index among the points $p\in P$ with $\sigma_2(p)<i$. 
\end{observation}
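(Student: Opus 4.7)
The plan is to establish the equality by proving two inequalities, using the definitions of $a(i)$ and $\sigma_2$ together with the fact that $s_i \in S^*$ means, via Lemma~\ref{lem:10}, that no point of $P$ outside $s_i$ is simultaneously covered by a disk in $S_l(s_i)=S[1,i-1]$ and a disk in $S_r(s_i)=S[i+1,m]$. Let $J_i=\{j : \sigma_2(p_j)<i\}$ and let $j^*=\max J_i$ (with the convention $j^*=0$ if $J_i=\emptyset$). The goal is to show $a(i)=j^*$.

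First I would show $a(i)\le j^*$. If $a(i)=0$ this is immediate, so assume $a(i)=t\ge 1$. By definition, $p_t\notin s_i$ and some disk $s_k$ with $k<i$ covers $p_t$, so $\sigma_2(p_t)\neq i$. Suppose for contradiction that some disk $s_{m}$ with $m>i$ also covered $p_t$. Then $p_t$ is a point outside $s_i$ covered by both a disk in $S_l(s_i)$ (namely $s_k$) and a disk in $S_r(s_i)$ (namely $s_m$). By Lemma~\ref{lem:10}, $s_i$ would be prunable, contradicting $s_i\in S^*$. Hence $\sigma_2(p_t)<i$, so $t\in J_i$ and therefore $t\le j^*$.

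Next I would show $a(i)\ge j^*$. If $J_i=\emptyset$ this is trivial, so suppose $p_t$ attains $t=j^*\ge 1$ with $\sigma_2(p_t)<i$. Since no disk of index $\ge i$ covers $p_t$, in particular $p_t\notin s_i$. On the other hand $p_t$ is covered by at least one disk of $S$ (by the standing assumption), and every such disk has index $\le \sigma_2(p_t)<i$, so $p_t$ is covered by some disk in $S[1,i-1]$. Thus $p_t$ satisfies both conditions in the definition of $a(i)$, giving $a(i)\ge t=j^*$. Combining the two directions yields $a(i)=j^*$, which is exactly the observation.

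Essentially all the content sits in the first direction, where the only non-trivial step is recognizing that the existence of a covering disk with index $>i$ together with $s_k$ triggers Lemma~\ref{lem:10}; the second direction is a direct unwinding of definitions, and the $a(i)=0$ / $J_i=\emptyset$ edge cases are handled by the conventions already in place. The main obstacle, if any, is keeping straight why $p_t\notin s_i$ is forced in the lower-bound direction: it follows because $\sigma_2(p_t)<i$ means index $i$ itself cannot be among the covering disks of $p_t$.
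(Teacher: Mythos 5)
Your proof is correct and takes essentially the same route as the paper's: both argue the two inequalities by showing that $j=a(i)$ forces $\sigma_2(p_j)<i$ (via the prunability criterion --- you cite Lemma~\ref{lem:10} directly where the paper cites its restatement, Observation~\ref{obser:30}) and that any $p_j$ with $\sigma_2(p_j)<i$ meets the defining conditions of $a(i)$. The only blemish is cosmetic: you reuse $m$ as the index of a covering disk, which collides with the paper's $m=|S|$.
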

\begin{proof}
First, consider a point $p_j\in P$ with $\sigma_2(p_j)<i$. Since $\sigma_2(p_j)<i$, we know that $p_j\not\in s_i$ and $p_j$ is covered by a point in $S[1,i-1]$. By the definition of $a(i)$, we have $j\leq a(i)$. 

On the other hand, let $j=a(i)$. By definition, $p_j\not\in s_i$ and $p_j$ is covered by a point in $S[1,i-1]$. By the definition of $\sigma_1(p_j)$, $\sigma_1(p_j)<i$ holds. We claim $\sigma_2(p_j)<i$. Indeed, assume to the contrary that $\sigma_2(p_j)\geq i$. Then, we have $\sigma_1(p_j)<i\leq \sigma_2(p_j)$. Since $p_j\not\in s_i$, by Observation~\ref{obser:30}, $s_i$ is prunable, which contradicts the fact that none of the disks of $S^*$ is prunable. As such, we obtain $\sigma_2(p_j)<i$. 

The above discussions combined lead to the observation. \qed
\end{proof}

In light of Observation~\ref{obser:40}, we have the following lemma. 

\begin{lemma}\label{lem:70}
$a(i)$ for all disks $s_i\in S^*$ can be computed in $O(n\log n+m)$ time. 
\end{lemma}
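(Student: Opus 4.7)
The plan is to reduce the computation to a simple offline prefix-maximum scan by leveraging Observation~\ref{obser:40}, which equates $a(i)$ with $\max\{j : \sigma_2(p_j) < i\}$ (with the convention that the maximum over the empty set is $0$). Since the values $\sigma_2(p_j)$ have already been computed for every $j$ via Lemma~\ref{lem:50}, the remaining work is a straightforward bucket-and-sweep over the integer range $[1,m]$.

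Concretely, I would proceed as follows. First, allocate an array $A[1..m+1]$ initialized to $0$ and, for each point $p_j$ in $O(n)$ total time, update $A[\sigma_2(p_j)+1] \leftarrow \max(A[\sigma_2(p_j)+1],\, j)$; this records, at the first disk index for which $p_j$ becomes a witness, the largest point index contributed by $p_j$ (entries with index exceeding $m$ can simply be discarded). Then sweep $i$ from $1$ to $m$ while maintaining a running maximum $r$ via $r \leftarrow \max(r, A[i])$ and setting $a(i) \leftarrow r$. By construction, at the end of step $i$ we have $r = \max\{j : \sigma_2(p_j)+1 \le i\} = \max\{j : \sigma_2(p_j) < i\}$, which by Observation~\ref{obser:40} is precisely $a(i)$ for every $s_i \in S^*$. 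The $a(i)$ values for disks not in $S^*$ are discarded.

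Correctness is immediate from the description, and the running time is $O(n+m)$ since each bucket placement and each sweep step takes $O(1)$; this fits comfortably inside the claimed $O(n\log n + m)$ bound. (If one prefers to avoid bucketing, an alternative is to sort the pairs $(\sigma_2(p_j), j)$ by first coordinate in $O(n\log n)$ time and then merge-scan them with $i = 1, \ldots, m$, which matches the stated bound exactly.) I do not foresee any genuine obstacle: the whole difficulty is absorbed into Observation~\ref{obser:40}, and once that characterization is in hand the implementation is a textbook offline prefix-max computation. The symmetric computation of $b(i) = \min\{j : \sigma_1(p_j) > i\}$ follows the same template with a right-to-left sweep.
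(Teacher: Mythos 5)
Your proposal is correct and takes essentially the same approach as the paper: both reduce the task, via Observation~\ref{obser:40}, to a sweep over disk indices $i=1,\ldots,m$ that maintains a running maximum of the $x$-sorted indices of points $p$ with $\sigma_2(p)<i$. Your bucketing variant even avoids the sort and runs in $O(n+m)$, while your parenthetical sort-and-merge alternative is exactly the paper's implementation.
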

\begin{proof}
Recall that $p_1,p_2,\ldots,p_n$ are points of $P$ sorted in ascending order by $x$-coordinate. For each point $p\in P$, let $j(p)$ denote its index, referred to as the {\em $x$-sorted index}. We sort all the points $p$ of $P$ in ascending order by their values $\sigma_2(p)$ as $p^1,p^2,\ldots,p^n$. We process these points in this order. Each point $p^k$ is processed as follows. Our algorithm maintains $j^{k-1}$, the largest $x$-sorted index among all points $p^1,p^2,\ldots,p^{k-1}$. We first set $j^k$ to be the larger one of $j^{k-1}$ and $j(p^k)$. Then, for each integer $i$ with $\sigma_2(p^k)<i\leq \sigma_2(p^{k+1})$ (if $k=n$, then we consider each $i$ with $\sigma_2(p^k)<i\leq m$), if $s_i\in S^*$, then we set $a(i)=j^k$, whose correctness follows from Observation~\ref{obser:40}.
After all points of $P$ are processed as above, $a(i)$ for all disks $s_i\in S^*$ are computed. The total time is $O(n\log n+m)$. \qed
\end{proof}

Combining Lemmas~\ref{lem:50}, \ref{lem:60}, and \ref{lem:70}, the total runtime of the algorithm is $O((n+m)\log(n+m)+m\log m\log n)$. This proves Theorem~\ref{theo:coverage}. 

\paragraph{\bf An algebraic decision tree algorithm.} In the algebraic decision tree model, where the time complexity is measured only by the number of comparisons, we can solve the problem in $O((n+m)\log (n+m))$ time, i.e., using $O((n+m)\log (n+m))$ comparisons. For this, observe that the overall algorithm excluding Lemma~\ref{lem:60} runs in $O((n+m)\log(n+m))$ time. Hence, it suffices to show that Lemma~\ref{lem:60} can be solved using $O((n+m)\log (n+m))$ comparisons. To this end, notice that the factor $O(m\log m\log n)$ in the algorithm of Lemma~\ref{lem:60} is due to the point location queries on the farthest Voronoi diagrams $\fd_v$. There are a total of $O(m\log m)$ queries. The total combinatorial complexity of the diagrams $\fd_v$ of all nodes $v\in T$ is $O(n\log m)$. To solve these point location queries, we can use a technique recently developed by Chan and Zheng~\cite{ref:ChanHo23}. Specifically, we can simply apply \cite[Theorem~7.2]{ref:ChanHo23} to solve all our point location queries using $O((n+m)\log (n+m))$ comparisons (indeed, following the notation in \cite[Theorem~7.2]{ref:ChanHo23}, we have $t=O(m)$, $L=O(n\log m)$, $M=O(m\log m)$, and $N=O(n+m)$ in our problem; according to the theorem, all point location queries can be answered using $O(L+M+N\log N)$ comparisons, which is $O((n+m)\log (n+m))$).

\section{The unit-disk case}
\label{sec:unitdisk}

In this section, we show that the runtime of the algorithm can be reduced to $O((n+m)\log(n+m))$ for the unit-disk case. For this, observe that except for Lemma~\ref{lem:60}, the algorithm in Section~\ref{sec:cover} runs in $O((n+m)\log(n+m))$ time. Hence, 
it suffices to show that Lemma~\ref{lem:60} can be implemented in $O((n+m)\log(n+m))$ time for the unit-disk case. To this end, we have the following lemma. 

\begin{lemma}\label{lem:80}
For the unit-disk case, all prunable disks of $S$ can be found in $O((n+m)\log(n+m))$ time. 
\end{lemma}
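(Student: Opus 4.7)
The plan is to accelerate Lemma~\ref{lem:60} by exploiting the common radius. For unit disks, ``$p\notin s_i$'' is simply ``$|pc_i|>r$'' with the \emph{same} threshold $r$ for every $i$, so the per-node test reduces to a single thresholded farthest-point comparison rather than general point location on a farthest-point Voronoi diagram as in Lemma~\ref{lem:60}.

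I would first reuse Lemma~\ref{lem:50} to compute $\sigma_1(p)$ and $\sigma_2(p)$ for every $p\in P$ in $O((n+m)\log m)$ time. By Observation~\ref{obser:30}, a disk $s_i$ is prunable iff the ``active'' set $A_i=\{p\in P:\sigma_1(p)<i<\sigma_2(p)\}$ contains a point at distance greater than $r$ from $c_i$; equivalently, iff the farthest point of $A_i$ from $c_i$ is at distance greater than $r$. Because the threshold $r$ is common to all disks, it suffices to maintain an extreme-point oracle on the dynamically changing set $A_i$ and compare the answer against the fixed value $r$.

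I would then treat the indices $i=1,\ldots,m$ as a time axis. Each $p\in P$ is active exactly on $[\sigma_1(p)+1,\sigma_2(p)-1]$, so it is inserted at time $\sigma_1(p)+1$ and deleted at time $\sigma_2(p)$. Sweeping $i$ from $1$ to $m$ and interleaving insertions, deletions, and queries, I would maintain the convex hull of $A_i$ and, at each $i$, perform a tangent-based farthest-from-$c_i$ query; if its value exceeds $r$ we flag $s_i$ as prunable, otherwise $s_i\in S^*$. With $O(n)$ insertions, $O(n)$ deletions, and $m$ queries, plugging in a semi-dynamic convex hull that supports insertion, deletion, and farthest-point query in $O(\log n)$ amortized time yields $O((n+m)\log n)\subseteq O((n+m)\log(n+m))$ overall.

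The main obstacle is supplying such a dynamic convex hull whose tangent queries cooperate with the amortized update bound; off-the-shelf hull structures are typically stated for extremal-in-a-direction queries, so one must verify that the farthest-point tangent search fits into the same framework. If that proves inconvenient, a concrete alternative is to build a segment tree on the index axis, store each $p$ in the $O(\log m)$ canonical nodes whose intervals lie in $(\sigma_1(p),\sigma_2(p))$, precompute the convex hull at every node in $O(n\log m)$ total time by merging presorted lists bottom-up, and answer each $s_i$ by visiting the hulls on the root-to-leaf path. A naive traversal costs $O(\log m\cdot\log n)$ per disk, so to reach the target one would coordinate the tangent searches along each path with a fractional-cascading-style structure (in the spirit of the one used inside Lemma~\ref{lem:50}) so that each path's total tangent-search cost drops to $O(\log m+\log n)$, giving $O((n+m)\log(n+m))$ in total.
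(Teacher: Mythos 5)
Your overall skeleton (reuse $\sigma_1,\sigma_2$, Observation~\ref{obser:30}, and a range structure over the index axis) matches the paper's, but both of your concrete implementations leave the decisive step unsupported. The per-disk, per-node test you need is ``does this set of points contain one at distance more than $r$ from $c_i$?'', i.e., a farthest-point query from $c_i$ followed by a comparison with $r$. That is exactly the query already used in Lemma~\ref{lem:60} (point location in a farthest-point Voronoi diagram is how one answers it), so the common threshold $r$ by itself does not make the query any easier. In your sweep version you would need a \emph{fully} dynamic structure (points are both inserted and deleted) answering farthest-point queries in $O(\log n)$ amortized time; no such off-the-shelf structure exists --- dynamic convex hulls with $O(\log n)$ amortized updates support extreme-point-in-a-direction and tangent queries, not farthest-vertex-from-a-query-point queries, and the distance from a fixed point to the vertices of a convex polygon is not unimodal along the boundary (a flat wide hull with the query point below its middle has two local maxima), so a ``tangent-based'' binary search does not find the farthest vertex. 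In your segment-tree fallback, cascading these hull searches along a root-to-leaf path is not something classical fractional cascading supports: it applies to one-dimensional predecessor searches over a common ordered universe, whereas batching point locations in planar subdivisions is precisely the Chan--Zheng machinery that the paper invokes only in the algebraic decision tree model. As stated, your fallback stalls at the same $O(m\log m\log n)$ bottleneck as Lemma~\ref{lem:60}.

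The idea you are missing is the paper's dualization of the per-node test. For unit disks, ``some $p\in P_v$ lies outside $s_i$'' is equivalent to ``$c_i$ lies outside $\bigcap_{p\in P_v} D_p$'', where $D_p$ is the unit disk centered at $p$. Since all points of $P$ are above $\ell$ and all centers $c_i$ are below $\ell$, only the portion $C_v$ of this common intersection below $\ell$ matters; its boundary is an $x$-monotone curve with $O(|P_v|)$ vertices (the disk boundaries pairwise cross at most once below $\ell$), computable in linear time from the $x$-sorted $P_v$ by a Graham-scan-like pass. The test ``$c_i\in C_v$'' then reduces to locating $x(c_i)$ among the $x$-coordinates of the vertices of $C_v$ --- a genuine one-dimensional predecessor search --- after which membership is decided in $O(1)$ time. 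This is what makes standard fractional cascading applicable along each path $\pi_i$, giving $O(\log n+\log m)$ per disk and $O((n+m)\log(n+m))$ overall. You should replace the farthest-point/tangent queries in your plan with this membership test in the intersection of congruent disks.
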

\begin{proof}
We follow the notation in Lemma~\ref{lem:60}. The algorithmic scheme is similar to Lemma~\ref{lem:60}. The major change is that, instead of constructing the farthest Voronoi diagrams for the nodes of $T$, here we perform different processing by exploring the property that all disks of $S$ have the same radius. 

Consider a disk $s_i\in S$. To determine whether $s_i$ is prunable, recall that it suffices to decide for each node $v\in \pi_i$ whether $P_v$ has a point outside $s_i$. We explore the property that all disks of $S$ have the same radius (a disk of that radius is called a {\em unit disk}). For each point $p\in P$, let $D_p$ denote a unit disk centered at $p$. Define $\calD_v=\{D_p\ |\ p\in P_v\}$. Let $c_i$ denote the center of $s_i$. Observe that $P_v$ has a point outside $s_i$ if and only if $c_i$ is outside a disk of $\calD_v$. Recall that all points of $P$ are above the $x$-axis $\ell$ while the centers of all disks of $S$ are below $\ell$. Let $C_v$ denote the common intersection of all disks of $\calD_v$ below $\ell$. Observe that $c_i$ is outside a disk of $\calD_v$ if and only if $c_i$ is outside $C_v$. 

We perform the following processing on $T$. For each node $v\in T$, we construct $C_v$. Since all disks of $\calD_v$ have the same radius and their centers are all above $\ell$, the boundaries of every two disks intersect at most once in the halfplane below $\ell$. Using this property, $C_v$ can be computed in linear time by adapting Graham's scan after $P_v$ is sorted by $x$-coordinate (see \cite[Lemma 3]{ref:ChanAl16} for a similar problem). Computing the sorted lists for $P_v$ for all nodes $v\in T$ can be done in $O(n\log n+n\log m)$ time, as described in the proof of Lemma~\ref{lem:60}. As such, constructing $C_v$ takes additional $O(|P_v|)$ time and the total time for constructing $C_v$ of all nodes $v\in T$ is $O(n\log m)$ since $\sum_{v\in T}|P_v|=O(n\log m)$. 

For each disk $s_i\in S$, our task is to determine whether $c_i\in C_v$ for each node $v\in \pi_i$. Note that the boundary of $C_v$ consists of a line segment on $\ell$ bounding $C_v$ from above and an $x$-monotone curve bounding $C_v$ from below. %called the {\em lower boundary curve}. 
The projections of the vertices of $C_v$ onto the $x$-axis $\ell$ partition $\ell$ into a set $\calI_v$ of $O(|P_v|)$ intervals. To determine whether $c_i\in C_v$, it suffices to find the interval of $I_v$ containing $x(c_i)$, the $x$-coordinate of $c_i$, after which whether $c_i\in C_v$ can be decided in $O(1)$ time. Finding the interval of $I_v$ containing $x(c_i)$ can be done in $O(\log n)$ time by binary search. If we do this for all nodes $v\in \pi_i$, the total time to determine whether $s_i$ is prunable would be $O(\log m\log n)$. We can improve the runtime to $O(\log m+\log n)$ using fractional cascading~\cite{ref:ChazelleFr86} in a way similar to the proof of Lemma~\ref{lem:50}. More specifically, we construct a fractional cascading data structure on the intervals of $\calI_v$ of all nodes $v\in T$, which takes $O(n\log m)$ time since the total number of such intervals is linear in $\sum_{v\in T}|P_v|$, which is $O(n\log m)$. With the fractional cascading data structure, we only need to do binary search on the set of the intervals stored at the root of $T$ to find the interval containing $x(c_i)$, which takes $O(\log (n\log m))$ time. After that, following the path $\pi_i$ in a top-down manner, the interval of $\calI_v$ containing $x(c_i)$ for each node $v\in \pi_i$ can be determined in $O(1)$ time~\cite{ref:ChazelleFr86}. As such, whether the disk $s_i$ is prunable can be determined in $O(\log n+\log m)$ time. 

In summary, the total time of the algorithm for finding all prunable disks of $S$ is bounded by $O((n+m)\log (n+m))$. \qed
\end{proof}

Combining Lemmas~\ref{lem:50}, \ref{lem:70}, and \ref{lem:80} leads to the following result for the unit-disk case. 

\begin{theorem}\label{theo:coverageUnit}
Given a set $P$ of $n$ points and a set $S$ of $m$ unit disks in the plane such that the centers of the disks are separated from the points of $P$ by a line, the disk coverage problem for $P$ and $S$ is solvable in $O((n+m)\log(n+m))$ time.
\end{theorem}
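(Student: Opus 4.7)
The plan is to reuse the three-step algorithmic framework established for the general line-separable single-intersection case in Section~\ref{sec:cover}, but to swap out the bottleneck subroutine. Recall that the framework (i) identifies the subset $S^*$ of non-prunable disks, (ii) computes $a(i)$ and $b(i)$ for every $s_i \in S^*$, and (iii) reduces the coverage problem on $(S^*,P)$ to a 1D interval coverage problem on the projections $P'$ of $P$ onto $\ell$, which can be solved greedily in $O((n+m)\log(n+m))$ time. Correctness of all three steps is already established by Lemmas~\ref{lem:coverprune}, \ref{lem:30}, and \ref{lem:40}, and those proofs use only the non-containment property and the single-intersection condition, both of which hold trivially for unit disks whose centers lie on or below~$\ell$. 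So nothing needs to be reproved at the correctness level.

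What remains is to account for the total running time under the unit-disk assumption. For step~(ii), Lemma~\ref{lem:50} computes $\sigma_1(p), \sigma_2(p)$ for every $p \in P$ in $O((n+m)\log m)$ time, and Lemma~\ref{lem:70} then derives all values $a(i)$ (and symmetrically $b(i)$) in $O(n \log n + m)$ time; both bounds are already within $O((n+m)\log(n+m))$ and do not use the unit-radius hypothesis. The only step whose previous implementation exceeds this budget is step~(i): Lemma~\ref{lem:60} needed $O(m \log m \log n)$ time because it relied on farthest-point Voronoi diagrams and point-location queries at every node of the segment tree $T$.

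The key substitution is to invoke Lemma~\ref{lem:80} in place of Lemma~\ref{lem:60}. That lemma already exploits the unit-radius property by replacing each $P_v$ with the set $\calD_v$ of unit disks centered at points of $P_v$, and replacing the farthest-point Voronoi diagram by the common intersection $C_v$ of $\calD_v$ below $\ell$. Since the boundaries of any two such disks cross at most once below $\ell$, each $C_v$ is computable in linear time from an $x$-sorted $P_v$, and a single fractional-cascading structure over the $O(n \log m)$ boundary intervals of all $C_v$'s lets us test $c_i \in C_v$ for every $v \in \pi_i$ in $O(\log n + \log m)$ time per disk $s_i$. This brings the cost of step~(i) down to $O((n+m)\log(n+m))$.

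Putting these three bounds together, the total runtime is $O((n+m)\log(n+m))$, which proves the theorem. I do not expect a serious obstacle here: the conceptual work was carried out in Lemma~\ref{lem:80}, and the rest of the argument is simply citing Lemmas~\ref{lem:50}, \ref{lem:70}, \ref{lem:80}, and the 1D coverage subroutine, and verifying that each contributes at most $O((n+m)\log(n+m))$ time. The only point worth double-checking is that the framework's correctness lemmas from Section~\ref{sec:covercorrect} never used anything specific to the general single-intersection setting beyond what unit disks already satisfy, which is immediate from inspection.
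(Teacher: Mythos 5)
Your proposal matches the paper's proof exactly: the paper also obtains Theorem~\ref{theo:coverageUnit} by combining Lemmas~\ref{lem:50}, \ref{lem:70}, and \ref{lem:80}, i.e., by reusing the general framework of Section~\ref{sec:cover} and substituting Lemma~\ref{lem:80} for the bottleneck Lemma~\ref{lem:60}. The argument is correct and complete.
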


\bibliographystyle{plainurl}

\end{document}